\documentclass[conference]{ieeetran}
\IEEEoverridecommandlockouts

\usepackage[utf8]{inputenc}
\usepackage{amsfonts}
\usepackage{amsmath}
\usepackage{amsthm}
\usepackage{xspace}
\usepackage{xcolor}
\usepackage{tabularx}
\usepackage{dutchcal} % for small calligraphic letters
%%% RK's added packages
\usepackage{graphicx}
\usepackage{caption}
\usepackage{subcaption}
\usepackage{comment} 
\usepackage{mathtools}
\usepackage[mathscr]{euscript}
\usepackage{hyperref}
\usepackage{cleveref}
\usepackage[normalem]{ulem}
\usepackage{comment}
\usepackage[misc]{ifsym} % for the email symbol
\usepackage{bbm}

\def\BibTeX{{\rm B\kern-.05em{\sc i\kern-.025em b}\kern-.08em
    T\kern-.1667em\lower.7ex\hbox{E}\kern-.125emX}}

\newcommand\edit[1]{\textcolor{black}{#1}}
\newcommand\toremove[1]{\textcolor{red}{}}
\newcommand\editAllerton[1]{\textcolor{black}{#1}}

\newcommand\MC[1]{$\clubsuit$\footnote{MC: #1}}

\newcommand\IR[1]{$\dag$\footnote{IR: #1}}

\newtheorem{theorem}{Theorem}
\newtheorem{lemma}{Lemma}

\newtheorem{remark}{Remark}
\newtheorem{definition}{Definition} % definition numbers are dependent on theorem
\newtheorem{problem}{Problem} % definition numbers are dependent on theoremnumbers

%%%% CONCRETE MODELS %%%%

% true system
\newcommand{\sys}{\ensuremath{\mathsf{M}_{\text{sys}}}\xspace}

% model 
\newcommand{\mdl}{\ensuremath{\mathsf{M}_{\text{abs}}}\xspace}

% the original controller-plant model
\newcommand{\mcp}{\ensuremath{\mathsf{M}_{\text{cpl}}}\xspace}

% The perception/LEC model
\newcommand{\mper}{\ensuremath{\textsf{M}_{\text{per}}}\xspace}

% generic MDP
\newcommand{\mdp}{\ensuremath{\mathsf{M}^{\text{mdp}}}\xspace}

% generic IMDP
\newcommand{\imdp}{\ensuremath{\mathsf{M}^{\text{imdp}}}\xspace}

% path 
\newcommand{\ourpath}{\ensuremath{\pi}\xspace}

% a set of paths
\newcommand{\pset}{\ensuremath{\Pi}\xspace}

% AEBS system commands
% time step in our model, 1/freq
\newcommand{\timestep}{\ensuremath{\tau}\xspace}

% safety property
\newcommand{\oursafeprop}{\ensuremath{\psi_{nocol}}\xspace}

% length of car
\newcommand{\carlen}{\ensuremath{L}\xspace}

% names for various baselines

\newcommand{\modelOurs}{\ensuremath{\mathsf{M}_{ours}}\xspace}

\newcommand{\modelNoDeriv}{\ensuremath{\mathsf{M}_{oursNPE}}\xspace}

\newcommand{\modelNoCI}{\ensuremath{\mathsf{M}_{noCI}}\xspace}

\newcommand{\modelGTPer}{\ensuremath{\mathsf{M}_{GTPer}}\xspace}

\newcommand{\modelLogRegCI}{\ensuremath{\mathsf{M}_{logRegCI}}\xspace}

% case studies
\newcommand{\probEnlargeFactor}{\ensuremath{w_{PE}}}

\begin{document}

\title{Conservative Perception Models\\ for Probabilistic Verification*
\thanks{DISTRIBUTION STATEMENT A. Approved for public release. Distribution is unlimited.}
}

% For aligning authors
\makeatletter
\newcommand{\linebreakand}{%
  \end{@IEEEauthorhalign}
  \hfill\mbox{}\par
\mbox{}\hfill
\begin{@IEEEauthorhalign}
}
    
\author{\IEEEauthorblockN{1\textsuperscript{st} Matthew Cleaveland}
\IEEEauthorblockA{\textit{MIT Lincoln Laboratory} \\
Lexington, MA, USA \\
matthew.cleaveland@ll.mit.edu}
\and
\IEEEauthorblockN{2\textsuperscript{nd} Pengyuan Lu}
\IEEEauthorblockA{
% \textit{Computer and Information Science} \\
\textit{University of Pennsylvania}\\
Philadelphia, PA, USA \\
pelu@seas.upenn.edu}
\and
\IEEEauthorblockN{3\textsuperscript{rd} Oleg Sokolsky}
\IEEEauthorblockA{
% \textit{Computer and Information Science} \\
\textit{University of Pennsylvania}\\
Philadelphia, PA, USA \\
sokolsky@seas.upenn.edu}
\linebreakand
\IEEEauthorblockN{4\textsuperscript{th} Insup Lee}
\IEEEauthorblockA{
% \textit{Computer and Information Science} \\
\textit{University of Pennsylvania}\\
Philadelphia, PA, USA \\
lee@seas.upenn.edu}
\and
\IEEEauthorblockN{5\textsuperscript{th} Ivan Ruchkin}
\IEEEauthorblockA{
% \textit{Electrical and Computer Engineering} \\
\textit{University of Florida}\\
Gainesville, FL, USA \\
iruchkin@ece.ufl.edu}
}

\maketitle

% \title{Conservative Perception Models for Probabilistic Verification\vspace{-0.3cm}}
% \author{Matthew Cleaveland\inst{1}$^{\text{(\Letter)}}$ \and Pengyuan Lu\inst{1} \and Oleg Sokolsky\inst{1} \and Insup Lee\inst{1} \and Ivan Ruchkin\inst{2}}
% %\date{February 2024}
% \institute{University of Pennsylvania, Philadelphia, PA, USA 19104 \\ \email{\{mcleav,pelu,sokolsky,lee\}@seas.upenn.edu} \and University of Florida, Gainesville, FL, USA 32611 \\ \email{iruchkin@ece.ufl.edu} }

%%
%% The "title" command has an optional parameter,
%% allowing the author to define a "short title" to be used in page headers.
%\title{Conservative Perception Models for Probabilistic Verification}

%%
%% The "author" command and its associated commands are used to define
%% the authors and their affiliations.
%% Of note is the shared affiliation of the first two authors, and the
%% "authornote" and "authornotemark" commands
%% used to denote shared contribution to the research.

\begin{abstract}
    Verifying the behaviors of autonomous systems with learned perception components is a challenging problem due to the complexity of the perception and the uncertainty of operating environments. Probabilistic model checking is a powerful tool for providing guarantees on stochastic models of systems. However, constructing model-checkable models of black-box perception components for system-level mathematical guarantees has been an enduring challenge. In this paper, we propose a method for constructing provably conservative Interval Markov Decision Process (IMDP) models of closed-loop systems with perception components. We prove that our technique results in conservative abstractions with \editAllerton{a user-specified probability}. We evaluate our approach in an automatic braking case study using both a synthetic perception component and the object detector YOLO11 in the CARLA driving simulator. 
\end{abstract}

\begin{IEEEkeywords}
neural perception, probabilistic model checking, interval Markov decision process
\end{IEEEkeywords}

\section{Introduction}
\label{sec:intro}

\looseness=-1
Model checking is a widespread method for verifying the safety of autonomous systems~\cite{katoen_probabilistic_2016}. However, advances in machine learning have led to systems with complex learning-enabled components (LECs). These LECs tend to exhibit unpredictable, difficult-to-model behaviors~\cite{faria_non-determinism_2017,burton_making_2017,yampolskiy_predicting_2018}. The uncertainties of LECs and real-world environments are frequently handled with probabilistic model checking (PMC)~\cite{kwiatkowska_probabilistic_2022,katz_generating_2021}, \editAllerton{which can be used to compute} the probability that a safety property holds. PMC is particularly suitable for analyzing highly uncertain systems since it explicitly models both stochastic and non-deterministic behaviors.

Using PMC to verify closed-loop systems requires building a stochastic model of the system, usually as a probabilistic automaton (PA) or a Markov decision process (MDP). This model is often abstracted from a more detailed description, such as differential or difference equations. Then model checking tools, such as PRISM~\cite{kwiatkowska_prism_2011} or Storm~\cite{hensel_probabilistic_2022}, are used to analyze that abstraction, such as computing the probability of the car colliding with an obstacle. If the abstraction overapproximates the true underlying system (i.e., contains all of its possible behaviors), PMC provides safety guarantees about the true system. Such abstractions are referred to as \emph{conservative abstractions}. 

\looseness=-1
Alas, constructing conservative abstractions of realistic systems is difficult. Consider a closed-loop autonomous system consisting of a controller, sensor, perception, and physical dynamics. Discretizing and conservatively modeling the LEC-free control and dynamics components as \edit{an MDP} with non-determinism is well-understood (see, e.g., \cite{cleaveland_monotonic_2022}). However, constructing provably conservative \edit{MDP} abstractions of high-dimensional sensing (e.g., a camera) and LEC-based perception is an open question in the literature, due to the uncertainty of the environment (e.g., the weather conditions affecting the camera) and the complexity of the perception algorithms (e.g., a large object detection neural network). 

Two major uncertainties obstruct the creation of provably conservative probabilistic abstractions of sensing and perception LECs, even given a sizeable validation dataset. 
%previous approaches fail 
First, one must account for the inherent uncertainty in using data-driven methods to fit \edit{MDP} parameters, e.g., by considering the confidence intervals of the estimated parameters of the model. Second, it is difficult to account for the mismatch between discrete \edit{MDP} models and sensing/perception components that vary continuously with respect to the (continuous) state of the true system. Existing perception abstraction methods \edit{either only consider the first challenge~\cite{puasuareanu_closed_2023}, or} consider \edit{neither} of these challenges~\cite{badithela_evaluation_2023,cleaveland_conservative_2023}, instead just assuming that their perception abstractions are conservative.

This paper develops a data-driven approach to constructing provably conservative \edit{MDP} models of the sensing and perception components for verification of LEC-based autonomous systems. As system models, we employ \emph{interval Markov decision processes (IMDPs)} --- probabilistic models that allow for transition probabilities to take any value within an interval. Our approach designs specialized \textit{confidence intervals (CIs)} to account for the inherent uncertainty in fitting a model using data (the first uncertainty above) and \textit{logistic regression} to estimate how the continuous perception behaviors change within the discrete states of the abstract model (the second uncertainty above). We prove that our abstractions are conservative with high probability.

We evaluate our approach on an automatic emergency braking system (AEBS) for an autonomous car, both with a fully synthetic perception component and with a YOLO11 perception component in the self-driving car simulator CARLA \cite{dosovitskiy_carla_2017}. Our experiments show that our abstractions are indeed empirically conservative, unlike simpler baselines. We also investigate how different hyperparameters of our approach affect the resulting safety estimates of the AEBS.

The contributions of this paper are three-fold:
\begin{itemize}
    \item We introduce a method for constructing conservative models of learning-enabled perception components with high probability.
    \item We prove that our perception abstractions, when composed with conservative abstractions of controller/plant models, produce conservative system models.
    \item We empirically evaluate our abstractions on AEBS case studies using both synthetic and learning-based perception models.
\end{itemize}

The rest of the paper is organized as follows. We survey related work in \Cref{sec:related_work} and give the necessary formal background in \Cref{sec:background}. \editAllerton{\Cref{sec:problem} introduces our problem statement, and} \Cref{sec:approach} introduces our perception abstraction approach. We state our conservatism guarantee in \Cref{sec:guarantee}. We describe the results of our two case studies in \Cref{sec:caseStudies}. The paper ends with a conclusion in \Cref{sec:discussion}.

\section{Related Work}
\label{sec:related_work}

% PROCESS THE LIST FROM THE OTHER DOCUMENT

% \textcolor{red}{TODO: assign papers to each bucket, then Eric can start to summarize them into related work subsections}

\subsection{Safe Perception}
\label{subsec:safe_perception}

% Order: (1) how to do better training (2) monitor and testing
% No matter how good the training is, the accuracy is not 100%, always exists some distribution shift - therefore our work (giving guarantees via verification)
% Stress testing paper: describe their approach, and identify testing vs. verification - why do we need verification
% Move Katz paper to perception abstraction, with Mitra and Xiayan's stuff

Autonomous safety depends on the accuracy of perception. Therefore, researchers in safety-critical systems aim to improve vision quality. First, fail-safe perceptions can be constructed by improved training quality. For instance, training quality may be measured by adversarial robustness~\cite{madry_towards_2018,seshia_semantic_2020}, so that probabilistic guarantees in vision accuracy can be achieved~\cite{robey_probabilistically_2022}. After training, there are methods for verifying the robustness of perception components \cite{paterson_deep_2021}. An alternative approach is to assist the perception with online monitoring and intervention so that inaccurate predictions are replaced. Such approaches include detectors for unfamiliar inputs (which are more likely to be mis-predicted)~\cite{richter_safe_2017} and diagnostic graphs that online analyze the cause of mis-predictions through module dependency~\cite{antonante_monitoring_2022}. Notice that neither of the two approaches guarantees successful predictions on sets of inputs. Moreover, researchers have assisted the above two solutions methods to identify the vulnerable inputs to perception models~\cite{kim_programmatic_2020,delecki_how_2022}. Specifically, they inject manually designed disturbance into the perception's inputs to mimic different adversarial scenarios, to find the inputs with a high likelihood of failure. Still, such testing is only able to identify point-wise vulnerable inputs but does not analyze continuous input sets. Additionally, researchers have developed techniques for incorporating system-level safety requirements into perception loss functions~\cite{corso_risk_2022} and those same safety requirements into accuracy requirements of the perception models~\cite{katz_predicting_2023}, providing useful information in perception construction and fine-tuning.

% Subsections:
% \begin{itemize}
    % \item Safe perception (but no guarantees), both by construction and testing. Can make perception better (more accurate/robust/safer). Papers to cite: ~\cite{richter_safe_2017,antonante_monitoring_2022,delecki_how_2022,katz_predicting_2023,madry_towards_2018,seshia_semantic_2020,robey_probabilistically_2022}

%Ivan's proposal paragraph: \underline{Safe and Robust Perception.} 
%Vision quality has been important in autonomy~\cite{zhang_predicting_2014,richter_safe_2017,gurau_learn_2018,piazzoni_modeling_2021,balakrishnan_percemon_2021,antonante_monitoring_2022,delecki_how_2022}; e.g., the sun angle can disrupt driver-assist systems~\cite{bauchwitz_effects_2022}. Traditionally,  perception is robustified with sensor redundancy and fusion~\cite{bar-shalom_tracking_2011,tian__2020} and adversarial training~\cite{madry_towards_2018,seshia_semantic_2020,robey_probabilistically_2022}; alas, this increases the costs but does not fully solve our problem. Image filtering/enhancement~\cite{fabbri_enhancing_2018,islam_fast_2020} may compensate for known shifts but lose information or even create new shifts. Complementarily, our proposal continues the line of work on risk-driven perception design~\cite{corso_risk-driven_2022,tan_enhancing_2022}. Including ``How Do We Fail? Stress Testing Perception in Autonomous Vehicles''

% and online estimation of perception quality~\cite{rahman_run-time_2021,daftry_introspective_2016,grimmett_introspective_2016,gurau_learn_2018,rabiee_competence-aware_2022}. 

\subsection{Model Checking}
\label{subsec:model_checking}

Model checking is an established approach to verify whether a system satisfies a given formal specification. Verification literature first focuses on deterministic overapproximate reachability, such as Verisig~\cite{ivanov_verisig_2021}, which proves specifications for neural networks with nonlinear activation functions. Recent research emphasizes probabilistic verification, where the system is represented as a stochastic process. These representations include discrete and continuous-time Markov chains (DMTCs and CMTCs), Markov decision processes (MDPs), probabilistic timed automata (PTAs)~\cite{kwiatkowska_prism_2011,lahijanian_formal_2015,zhao_probabilistic_2019,alasmari_quantitative_2022}, as well as interval Markov chains and Markov decision processes (IMC and IMDP)~\cite{zhao_interval_2020,badings_probabilities_2023} and MDPs with convex uncertainties \cite{puggelli_polynomial_2013}. Existing works in the literature have also investigated controller synthesis on IMCs and IMDPs \cite{jiang_safe_2022,mathiesen_accelerated_2024,wooding_impact_2024,jiang_local_2024,calinescu_controller_2024}. We build on this literature to verify a system model that provably accounts for black-box perception uncertainty.% An alternative approach of using Partially Observable MDPs (POMDPs) has been proposed but, to our knowledge, is  

\subsection{Perception Abstractions}
\label{subsec:perception_abstractions}

Another studied problem is how to abstractly represent the perception process. Many approaches using probabilistic models like Markov Decision Processes (MDPs) or similar to abstract perception components for model checking~\cite{puasuareanu_closed_2023,badithela_evaluation_2023,cleaveland_conservative_2023,peleska_stochastic_2023,mitra_formal_2025}. Other uses of perception models include run-time monitoring~\cite{ruchkin_confidence_2022,xie_mosaic_2023,antonante_task_2023} and falsification~\cite{dreossi_verifai_2019,xie_mosaic_2023}. Another strand of work constructs deterministic abstractions of perception components, such as set-based piece-wise linear functions~\cite{hsieh_verifying_2022} and computable error bounds~\cite{wang_bounding_2021}. Yet another line of research constructs conservative abstractions of full system behaviors using stochastic models like MDPs~\cite{ajeleye_data_2023} or Generalized Polynomial Chaos~\cite{joshi_generating_2023}. These approaches do not (and often cannot) prove that their model of a real-world object detector is conservative with respect to physical reality. 

 Alternatively, for specific vision systems, researchers have built detailed parameterized models (e.g., of a pinhole camera), primarily for road-following autonomous vehicle cameras~\cite{habeeb_verification_2023,joshi_generating_2023,puasuareanu_closed_2023,piazzoni_PEM_2024,habeeb_interval_2024} but also aircraft landing cameras~\cite{santa_cruz_nnlander-verif_2022} and racing car LiDARs~\cite{ivanov_case_2020}. However, such approaches are difficult to generalize to visually complex phenomena in a wide range of systems. %Deep neural network-based perception models have also drawn attention, to study accurate perception synthesis~\cite{sun_formal_2019,calinescu_controller_2024}, scalable verification~\cite{cai_scalable_2024}, and synthesizing controllers which can account for perception errors~\cite{calinescu_controller_2024}.
 
 The most similar works to ours \edit{are}~\cite{badithela_evaluation_2023,puasuareanu_closed_2023}. \edit{In~\cite{badithela_evaluation_2023},} the authors model the perception component of an autonomous system as a distance-parameterized confusion matrix. They then fit the parameters of the model using perception data from the system. However, the authors do not account for either of the two uncertainties mentioned in the introduction. Therefore, their abstraction may not be conservative, which could lead to a false positive verification verdict. \edit{In~\cite{puasuareanu_closed_2023}, the authors use uncertainty quantification methods to compute confidence intervals for perception LECs. However, their approach does not consider the mismatch between discrete PA models and continuous sensing/perception components.}

% Perception Abstractions: Papers to cite:~\cite{badithela_evaluation_2022,cleaveland_conservative_2023,cleaveland_monotonic_2022,ruchkin_confidence_2022,hsieh_verifying_2022,dreossi_verifai_2019,xie_mosaic_2023,wang_bounding_2021,santa_cruz_nnlander-verif_2022,ivanov_case_2020,piazzoni_PEM_2024,habeeb_verification_2023,sun_formal_2019,joshi_generating_2023,antonante_task_2023,antonante_monitoring_2022,zhao_interval_2020,calinescu_controller_2024,puasuareanu_closed_2023,katz_verification_2022,cai_scalable_2024}

%Ivan's stuff: 
%To model perception, most formal approaches use simple error probabilities and noise parameters~\cite{badithela_evaluation_2022,cleaveland_conservative_2023,cleaveland_monotonic_2022,ruchkin_confidence_2022,hsieh_verifying_2022,dreossi_verifai_2019,xie_mosaic_2023,wang_bounding_2021}, and some build detailed models of specific vision systems, such as aircraft landing cameras~\cite{santa_cruz_nnlander-verif_2022} or racing car lidars~\cite{ivanov_case_2020} or whatever Pavirthra did with pinhole \url{https://ieeexplore.ieee.org/document/10026474} 

% \end{itemize}

\section{Background: Probabilistic Model Checking}
\label{sec:background}

\begin{definition}[MDP] \label{def:mdp}
    A \emph{Markov Decision Process} (MDP) is a tuple $\mdp = (S,s_0,\alpha,\delta,L)$, where $S$ is a finite set of states, $s_0 \in S$ is the initial state, $\alpha$ is an alphabet of action labels, $\delta: S \times \alpha \times S \rightarrow [0,1]$ such that $\forall s\in S, a\in \alpha, \; \sum_{s'\in S} \delta(s,a,s')=1$ is a probabilistic transition function, and $L: S \rightarrow 2^{AP}$ is a state labeling function with atomic propositions $AP$.
\end{definition}
If $\delta(s,a,s')=p$, then \mdp can make a transition in state $s$ with action $a$ to state $s'$ with probability $p$. A state $s$ is \textit{terminal} if $~\forall a\in \alpha: \delta(s,a,s)=1$. A \textit{path} in \mdp is a (potentially infinite) sequence of transitions $\ourpath=s_0 \xrightarrow{a_1} s_1 \xrightarrow{a_2} \hdots$ with $\delta(s_i,a_i,s_{i+1}) \neq 0$. A \textit{set of paths} is denoted as $\pset$. We use $\mdp(s)$ to denote the MDP $\mdp$ with initial state $s$.
\begin{remark}
    An analogous, and perhaps more common, way of defining an MDP is to have the transition relation be $(S,\alpha,Dist(S)) \in \delta$, where $(s,a,\mu)\in \delta$ means that the MDP can transition from state $s$  with action label $a$ and move based on distribution $\mu$ to state $s'$ with probability $\mu(s')$. We adopt the functional definition in Def.~\ref{def:mdp} of the transition relation $\delta$ because it parallels the definition of transition relations %of Interval Markov Decision Processes (IMDPs) 
    below.
\end{remark}

\begin{definition}[IMDP] \label{def:imdp}
    An \emph{Interval Markov Decision Process} (IMDP) is a tuple $\imdp  = (S,s_0,\alpha,\Delta,L)$, where $S$ is a finite set of states, $s_0 \in S$ is the initial state, $\alpha$ is an alphabet of action labels, $\Delta: S \times \alpha \times S \rightarrow  \{ [b,c] \mid 0 < b \leq c \leq 1 \} \cup \{[0,0] \}$ is an interval transition probability function, and $L: S \rightarrow 2^{AP}$ is a state labeling function with atomic propositions $AP$. We further decompose $\Delta$ into its lower and upper components, denoted $\Delta^{l}$ and $\Delta^{u}$, respectively: $\Delta(s,a,s') = [\Delta^{l}(s,a,s'),\Delta^{u}(s,a,s')]$.%\IL{Question: does the sum of $\Delta$ for given a adds up to 1 as in Def 1?}\MC{No, not necessarily. The upper probabilities should sum to at least 1 and the lower probabilities should sum to no more than 1.}
\end{definition}

\begin{remark}
    The requirement that $b > 0$ in the definition of $\Delta$ ensures that the graph structure of the IMDP stays the same for all the values of $\Delta(s,a,s')$. This is necessary for IMDP model checking algorithms on which we rely.
\end{remark}

Now we define the usual parallel composition for MDPs and IMDPs. In this paper, the perception component will be represented by an MDP/IMDP and the dynamics/control will be represented by an MDP. %Given MDPs $\mdp_1$ and $\mdp_2$, we define parallel composition as:

\begin{definition}[MDP composition]\label{def:mdpComp}
    The \emph{parallel composition} of MDPs $\mdp_1 = \{S_1,s_{0}^{1},\alpha_1,\delta_1,L_1\}$ and $\mdp_2 = \{S_2,s_{0}^{2},\alpha_2,\delta_2,L_2\}$ is given by the MDP $\mdp_1 || \mdp_2 = \{ S_1 \times S_2,(s_{0}^{1},s_{0}^{2}),\alpha_1 \cup \alpha_2, \delta,L \}$, where $L(s_1,s_2) = L_1(s_1)\cup L_2(s_2)$ and 
    \begin{equation}
        \begin{split}
            &\delta((s_1,s_2),a,(s_{1}',s_{2}')) \\
            &= \begin{cases}
        \delta_1(s_1,a,s_1')\cdot \delta_2(s_2,a,s_2') &\text{ if } a\in \alpha_1 \bigcap \alpha_2 \\ 
        \delta_1(s_1,a,s_1') &\text{ if } a\in \alpha_1 \setminus \alpha_2 \land s_2 = s_2' \\
        \delta_2(s_2,a,s_2') &\text{ if } a\in \alpha_2 \setminus \alpha_1 \land s_1 = s_1' \\
        0 &\text{ otherwise }
            \end{cases}
        \end{split}
    \end{equation}
\end{definition}

% Given MDP \mdp and IMDP \imdp, we define parallel composition as:
\begin{definition} [MDP-IMDP composition]\label{def:imdpComp}
    The \emph{parallel composition} of MDP $\mdp = \{S_1,s_{0}^{1},\alpha_1,\delta_1,L_1\}$ with IMDP $\imdp  = \{S_2,s_{0}^{2},\alpha_2,\Delta_2,L_2\}$ is given by the IMDP $\mdp||\imdp= \{ S_1 \times S_2, (s_{0}^{1},s_{0}^{2}),\alpha_1 \cup \alpha_2,\Delta,L\}$, where $L(s_1,s_2) = L_1(s_1)\cup L_2(s_2)$ and 
    \begin{equation}
    \begin{split}
    &\delta((s_1,s_2),a,(s_{1}',s_{2}'))\\
    &= \begin{cases}[\delta_1(s_1,a,s_1')\cdot \Delta^{l}_2(s_2,a,s_2'),\delta_1(s_1,a,s_1')\cdot \Delta^{u}_2(s_2,a,s_2')]\\
    \quad\quad\quad \text{ if } a\in \alpha_1 \bigcap \alpha_2 \\ 
        [\delta_1(s_1,a,s_1'),\delta_1(s_1,a,s_1')]\\ \quad\quad\quad \text{ if } a\in \alpha_1 \setminus \alpha_2 \land s_2 = s_2' \\
        [ \Delta^{l}_2(s_2,a,s_2'),\Delta^{u}_2(s_2,a,s_2')]\\
        \quad\quad\quad\text{ if } a\in \alpha_2 \setminus \alpha_1 \land s_1 = s_1' \\
        [0,0] \quad \text{ otherwise }
    \end{cases}
    \end{split}
\end{equation}
\end{definition}

% Given an MDP \mdp and an IMDP \imdp, 
We also need a way of determining if the stochastic behaviors of an MDP \mdp are contained within the set of stochastic behaviors defined by an IMDP \imdp. To do that, we start with the known \edit{general} notion of satisfaction relations~\cite{jonsson_specification_1991}.

\begin{definition}[MDP-IMDP satisfaction][Definition 2 in \cite{delahaye_decision_2011}]\label{def:consPerGeneral}
    MDP $\mdp = \{S_1,s_{0}^{1},\alpha_1,\delta_1,L_1\}$ \emph{implements}  IMDP $\imdp  = \{S_2,s_{0}^{2},\alpha_2,\Delta_2,L_2\}$ if there exists a \emph{satisfaction relation} $R \subseteq S_1\times S_2$ such that whenever $s_1 R s_2$, then:
    \begin{itemize}
        \item $L_1(s_1) = L_2(s_2)$
        \item There exists a \emph{correspondence function} $\varphi: S_1 \rightarrow (S_2 \rightarrow [0,1])$ such that 
        \begin{itemize}
            \item For all $s'_1 \in S_1$ \edit{and $a_1 \in \alpha_1$}, if $\delta_1(s_1,a_1,s'_1) > 0$ then $\varphi(s'_1)$ defines a distribution on $S_2$,
            \item \edit{For all $a_1 \in \alpha_1$, there exists $a_2 \in \alpha_2$} such that $\sum_{s'_1 \in S_1} \delta_1(s_1,a_1,s'_1) \varphi(s'_1)(s'_2)\in \Delta_2(s_2,a_2,s'_2)$ for all $s'_2 \in S_2$, and
            \item \editAllerton{For all $s'_1 \in S_1$ and $s'_2 \in S_2$,} if $\varphi(s'_1)(s'_2)> 0$, then $s'_1 R s'_2$
        \end{itemize}
    \end{itemize}
\end{definition}

Note that the correspondence function $\varphi$ maps a state from \mdp to a distribution of states in \imdp. It is needed for the general case, when there may not be a direct correspondence of states in \mdp to states in \imdp, hence the mapping to state distributions. However, in our case, we already know that correspondence: the states of both processes are the states of our autonomous system. So for brevity, we restate this definition when \mdp and \imdp share a common state space $S$ and the states in \mdp directly map to states in \imdp. 

\begin{definition}[State-matched MDP-IMDP satisfaction]\label{def:consPerOurs}
    For MDP $\mdp = \{S,s_{0}^{1},\alpha_1,\delta_1,L_1\}$ and IMDP $\imdp  = \{S,s_{0}^{2},\alpha_2,\delta_2,L_2\}$ with the same state set $S$,  \mdp \emph{implements} \imdp if $\forall s\in S$:
    \begin{itemize}
        \item $L_1(s) = L_2(s)$,
        \item \edit{For all $a_1\in \alpha_1$, there exists $a_2 \in \alpha_2$ such that} $ \delta_1(s,a_1,s') \in \Delta_2(s,a_2,s')$
    \end{itemize}
\end{definition}

To summarize, the transition probabilities of \mdp are contained within the transition probability intervals of \imdp. Def.~\ref{def:consPerOurs} is a special case of Def.~\ref{def:consPerGeneral} when \edit{$\varphi(s) =\mathbbm{1}_s$}.

Reasoning about MDPs and IMDPs also requires the notion of \textit{schedulers}, which resolve non-determinism to execute the models. For our purposes, a scheduler $\sigma$ maps each state of the MDP/IMDP to an available action label in that state. We use $\pset_{\imdp}^{\sigma}$ for the set of all paths through $\imdp$ when controlled by scheduler $\sigma$ and $Sch_{\imdp}$ for the set of all schedulers for $\imdp$. \edit{Similar notions apply to MDPs as well.} %\footnote{For brevity, these definitions are only for IMDPs: MDPs are their special case.} %Finally, given a scheduler $\sigma$, we define a probability space $Pr_{\mdl}^{\sigma}$ over the set of paths $\pset_{\mdl}^{\sigma}$ in the standard manner. 

This paper is concerned with the probabilities of safety properties in the aforementioned stochastic models, which we state using \emph{Safety} Linear Temporal Logic (LTL) formulas over state labels \cite{kupferman_model_2001}.

\begin{definition}[Probability of LTL formula]\label{def:prob}
For LTL formula $\psi$, IMDP $\imdp$ \edit{(or MDP $\mdp$)}, and scheduler $\sigma \in Sch_{\imdp}$ \edit{(or $\sigma \in Sch_{\mdp}$)}, the \emph{\edit{upper and lower probabilities} of $\psi$ holding on $\imdp$} are:
\begin{equation}
    \begin{split}
        %Pr_{\imdp}^{\sigma}(\psi) \coloneqq Pr_{\imdp}^{\sigma}\{ \pi \in \pset_{\imdp}^{\sigma}~|~\pi \models \psi \}
    Pr_{\imdp}^{\sigma,u}(\psi) &\coloneqq Pr_{\imdp}^{\sigma,u}\{ \pi \in \pset_{\imdp}^{\sigma}~|~\pi \models \psi \}\\
    Pr_{\imdp}^{\sigma,l}(\psi) &\coloneqq Pr_{\imdp}^{\sigma,l}\{ \pi \in \pset_{\imdp}^{\sigma}~|~\pi \models \psi \},
    \end{split}
\end{equation}
and \editAllerton{the probability of $\psi$ holding on  $\mdp$ is}
\begin{equation}
    Pr_{\mdp}^{\sigma}(\psi) \coloneqq Pr_{\mdp}^{\sigma}\{ \pi \in \pset_{\mdp}^{\sigma}~|~\pi \models \psi \},
\end{equation}
where $\pi \models \psi$ indicates that the path $\pi$ satisfies $\psi$ in the standard LTL semantics~\cite{pnueli_temporal_1977}. We limit our scope to LTL safety properties, which are LTL specifications that can be falsified by a finite trace.
\end{definition}

%For our purposes, 
Probabilistically verifying an LTL formula $\psi$ against \imdp requires checking that the probability of satisfying $\psi$ meets the desired probability bounds for all schedulers. This involves computing the minimum or maximum probability of satisfying $\psi$ over all schedulers:\edit{
\begin{equation}
\begin{split}
    Pr_{\imdp}^{min}(\psi) &\coloneqq \operatorname{inf}_{\sigma\in Sch_{\imdp}} Pr_{\imdp}^{\sigma,l}(\psi) \\ Pr_{\imdp}^{max}(\psi) &\coloneqq \operatorname{sup}_{\sigma\in Sch_{\imdp}} Pr_{\imdp}^{\sigma,u}(\psi)
\end{split}
\end{equation}
}
%We call $\sigma$ a min scheduler of \imdp if $Pr_{\imdp}^{\sigma}(\psi) = Pr_{\imdp}^{min}(\psi)$. We use $Sch_{\imdp}^{min}$ to denote the set of min schedulers of \imdp.

The corresponding definition of an LTL formula $\psi$ holding on an MDP $\mdp$ is:
\begin{equation}
\begin{split}
    Pr_{\mdp}^{min}(\psi) &\coloneqq \operatorname{inf}_{\sigma\in Sch_{\mdp}} Pr_{\mdp}^{\sigma}(\psi) \\
    Pr_{\mdp}^{max}(\psi) &\coloneqq \operatorname{sup}_{\sigma\in Sch_{\mdp}} Pr_{\mdp}^{\sigma}(\psi)
\end{split}
\end{equation}

\section{Problem: Conservative Verification}\label{sec:problem}

%Problem: Conservative perception model synthesis

%Inputs: perception data, set of variables to condition on

% SOURCE: https://docs.google.com/presentation/d/1WEJ5CCh5qlp8Q1PeXI4sE2Qm2y7y-TP9sAMtlHG7QkI/edit#slide=id.g2d871a48977_0_5
\begin{figure*}[htb!]
    \centering
    \includegraphics[width=\textwidth]{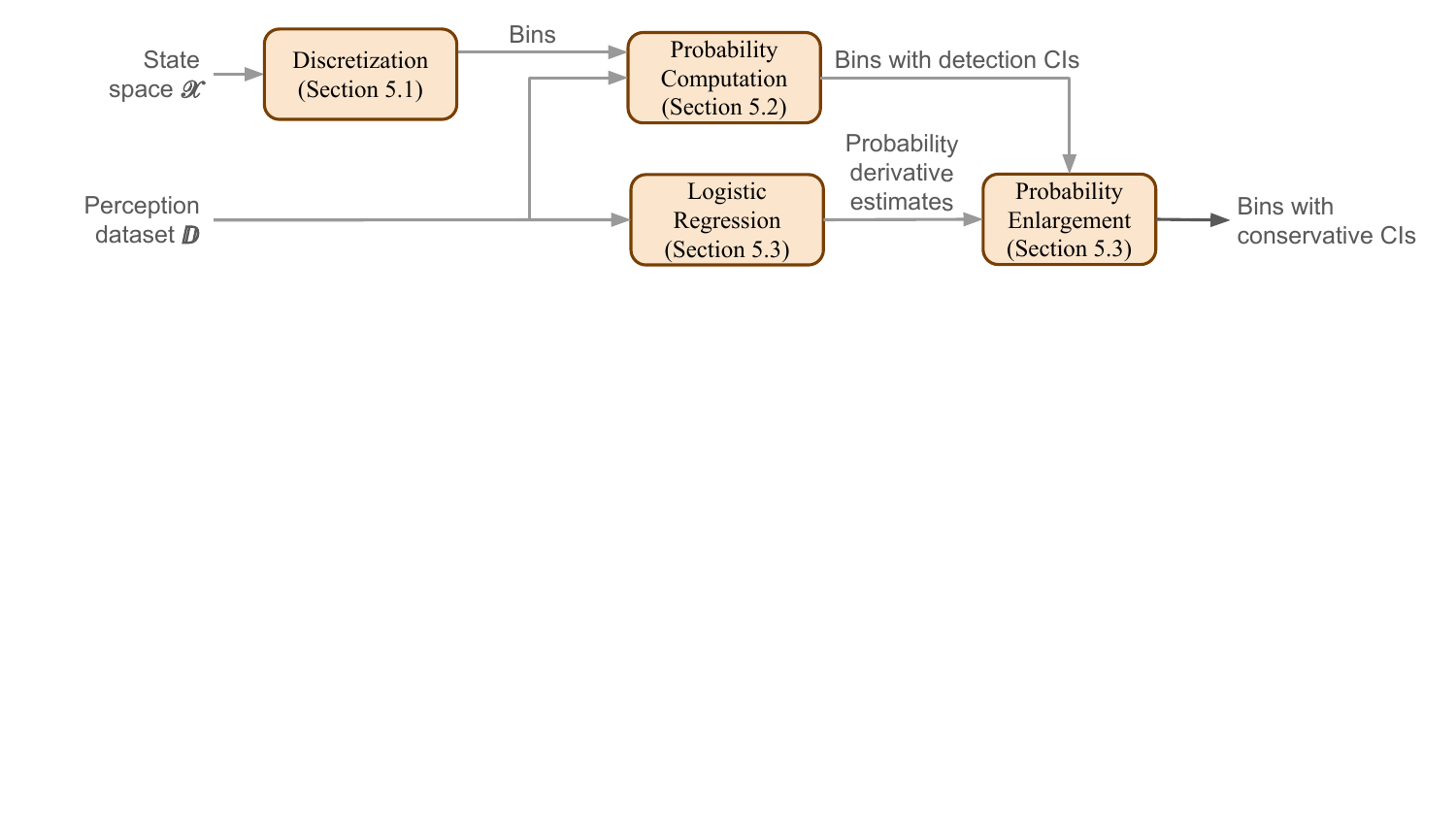}
    \vspace{-70mm}
    \caption{Overview of our approach to conservative perception modeling.}
    \label{fig:approach}
\end{figure*}

This paper addresses the problem of constructing perception models from data to enable conservative system verification. We call such perception models ``conservative perception models''. An IMDP is a \textit{conservative perception model} if it is implemented by the MDP of the true perception, per~\Cref{def:consPerOurs}. % with the MDP representing the true perception model, and the IMDP representing our perception. 
Consider the following \edit{discrete-time} closed-loop dynamical system denoted \sys with a \edit{binary} classifier or object detector for perception:
\begin{align}
\text{\sys}\begin{cases}
\begin{split}
    x(t+1)&=f(x(t),u(t))),\\
    y(t)&=g(x(t),v(t)), \\
    z(t)&=h(y(t)), \\
    u(t)&=c(z(t)),
\end{split}
\end{cases}
\label{eq:system}
\end{align}
where $x(t) \in \mathcal{X} \subset \mathbb{R}^n$ is the system state (with bounded $\mathcal{X}$); $y(t) \in \mathbb{R}^p$ are the observations; $z(t) \in \{0,1 \}$ is the output of a perception system\footnote{For clarity of presentation, we assume the perception system detects the presence of a single obstacle in the environment. However, our approach can handle systems with multiple types of obstacles and $z(t)$ from a fixed finite set.}, $u(t)\in\mathcal{U}\subset\mathbb{R}^m$ is the control output, the functions $f:\mathbb{R}^n\times\mathbb{R}^m\to\mathbb{R}^n$, $g:\mathbb{R}^n\times\mathbb{R}^v\to\mathbb{R}^p$, $h:\mathbb{R}^p \to \{0,1 \}$ describe the system dynamics, sensor maps, and perception maps; the function $c:\{0,1 \}\to\mathbb{R}^m$ is a stateless controller; and $v(t)\in D_v\subseteq\mathbb{R}^\mathcal{v}$ describes \editAllerton{(potentially unbounded)} perception noise. The $v(t)$ noise represents inexact sensing, such as a LiDAR scanner giving noisy distance estimates. \editAllerton{Due to the noise,~\Cref{eq:system} is a stochastic dynamical system.}

Along with \sys, we are given an LTL safety property $\psi$ to verify. The verification's goal is to compute a conservative estimate (i.e., a lower bound) of the probability that \sys satisfies $\psi$. To do this, \edit{our objective is to construct} an abstract and analyzable model of \sys, ideally as an MDP or IMDP, which we denote as \mdl. It also must be a \emph{conservative system model}, defined as follows.  
\begin{definition}[Conservative system model]
    Given system \sys as in \Cref{eq:system}, a model MDP or IMDP \mdl, and safety property $\psi$, we say the \mdl is a \emph{conservative system model} with respect to \sys if 
    \begin{equation}
        P^{min}_{\mdl}(\psi) \leq P_{\sys}(\psi)
    \end{equation}
    \label{def:conservative_system}
\end{definition}
\vspace{-4mm}

\noindent
\edit{\textbf{Model structure.}} 
When constructing \mdl, it is typical to split it into two components: a \emph{controller-plant model} \mcp (of $f$ and $c$ in Eq.~\ref{eq:system}) and a \emph{perception model} \mper (of $g$ and $h$ in Eq.~\ref{eq:system}). This decomposition is used because the behaviors in \mcp are often (non-)deterministic and/or straightforward to model using first principles (e.g., using the controller equations and system dynamics). \edit{On the other hand,} the behaviors in \mper are typically modeled probabilistically due to their complexity (e.g., large neural networks) and dependency on uncertain physical environments.

Therefore, the controller-plant model \mcp can be constructed as a conservative non-probabilistic MDP (i.e., one whose behavior contains that of true system \sys) with a standard discrete abstraction technique (for example, in~\cite{cleaveland_monotonic_2022}). However, constructing a conservative \mper, either as an MDP or IMDP, is an open problem. Therefore, this paper focuses on synthesizing \mper in a conservative manner using labeled perception data (i.e., the perception outputs $y$ paired with the true underlying states $x$) from the system \sys. In~\Cref{sec:guarantee}, we will show that our \mdl is a conservative system model, as per \Cref{def:conservative_system}. %when one performs the model checking on $\mdl$ it results in a conservative probability estimate $P^{min}_{\mdl}(\psi)$. 

With \mcp and \mper in hand, we can form \mdl via parallel composition \edit{that synchronizes perception actions (e.g., detection/non-detection)}:
\begin{equation}
    \mdl = \mcp~||~\mper
\end{equation}

Before moving on, we need to define the general structure of \mcp and give a formal definition for what it means for \mcp to be a \emph{conservative controller/plant abstraction}:
\begin{definition}[Controller-Plant Model]\label{def:mcp}
Our \emph{controller-plant model} \mcp is a non-probabilistic MDP with $\mcp =  \{S_{\mcp},s^0_{\mcp},\alpha_{\mcp},\delta_{\mcp},L_{\mcp}\}$. The (discrete) state space $S$ is formed by partitioning the (continuous) state space $\mathcal{X}$ of \sys into a finite set of mutually exclusive regions, so every $s \in S_{\mcp}$ has a corresponding region $X \subseteq \mathcal{X}$ (we use $x \in s$ to denote this correspondence). The initial state $s^0_{\mcp}$ is the state in $S_{\mcp}$ which contains the initial state $x_0$ of \sys in its set of corresponding states. The \edit{actions $\alpha_{\mcp} = \alpha_{per} \times \alpha_{reach}$, where $\alpha_{per}=\{ 0,1\}$ (which we refer to as ``perception actions'') are the outputs of the perception component $h$ and $\alpha_{reach}$ (which we refer to as ``reachability actions'') are the actions that correspond to the non-determinism required to conservatively model the continuous system dynamics $f$ and controller $u$ with a discrete abstraction.\footnote{For a visualization of this phenomenon, see Figure 2 in \cite{cleaveland_monotonic_2022}.}} The transition relation $\delta_{\mcp}$ maps states \edit{and a pair (perception action,  reachability actions)} to the next state in %$S_{\mcp}' \subseteq 
$S_{\mcp}$. Finally, \edit{$L_{\mcp} : S \rightarrow 2^{AP}$, were $AP$ is a set of atomic propositions. } \toremove{if state $s$ corresponds to an unsafe state in \sys (based on our safety property $\psi$) and $L_{\mcp}(s)=\emptyset$ otherwise.} 
\end{definition}

\begin{definition}[Conservative Controller-Plant Model]\label{def:McpCons}
    $\mcp = \{S_{\mcp},s_0,\alpha_{\mcp},\delta_{\mcp},L_{\mcp} \}$ is a \emph{conservative controller/plant abstraction} if for every state $s \in S_{\mcp}$ and \edit{perception action $a_{per}\in\alpha_{per}$}, the set of next states \edit{
    $S' = \{ s' \in S \mid \exists a_{reach} \in \alpha_{reach} \text{ s.t. } \delta_{\mcp}(s,\{a_{per},a_{reach}\},s')=1 \}$} in \mcp contains every state in \sys that can be mapped to by the dynamics of \sys. Formally: \editAllerton{ 
    \begin{align*} \forall s \in S_{\mcp}, \forall a_{per} \in \alpha_{per}, \forall x \in s, \exists s' \in S, \exists a_{reach} \in \alpha_{reach} \\
     \text{such that } \delta_{\mcp}(s,\{ a_{per}, a_{reach}\},s')=1 \text{ and }  f(x,u(a)) \in s'
    \end{align*}
    }
\end{definition}

\begin{remark}
    The mismatch between the discrete abstractions and continuous system behaviors is a challenge that existing approaches fail to consider when constructing \mper abstractions.
\end{remark}

%Due to the aforementioned first principles, constructing \mcp in a conservative manner is known. We point the reader to \cite{cleaveland_monotonic_2022} for one method of doing this. However, constructing a conservative \mper is an open problem. Therefore, this paper is interested in synthesizing \mper in a conservative manner using labeled perception data (i.e., the perception outputs paired with the true underlying states) from the system \sys. That way, when one performs the model checking on $\mdl$ it results in a conservative probability estimate $P^{min}_{\mdl}(\psi)$.

With the above in mind, this paper solves this problem:
\begin{problem}[Perception Abstraction Construction for Conservative Verification] \label{prop:ourProblem}
    Given the closed-loop dynamical system \sys defined in~\Cref{eq:system}, a safety property $\psi$, perception data $\mathcal{D}_{\text{sys}}=\{ x_i,z_i \}_{i=1,\hdots,N}$ where $x_i \in \mathbb{R}^n$ is the state of \sys and $z_i \in \{0,1\}$ is the output of the perception component, and a conservative model \mcp of the control and dynamics portion of \sys, synthesize a \editAllerton{conservative} model \mper of the perception component(s) of \sys such that
    \begin{equation}
        P^{min}_{\mcp || \mper}(\psi) \leq P_{\sys}(\psi)
    \end{equation}
\end{problem}

\section{Conservative Approach to Perception Modeling}
\label{sec:approach}

We now describe how to construct conservative models of \mper using IMDPs to solve Problem~\ref{prop:ourProblem}. Recall that we assume we are given a dataset of system states and perception outputs, denoted as $\mathcal{D}=\{ x_i,z_i \}_{ i=1,\hdots,N}$. With this dataset in hand, our modeling approach consists of three steps overviewed in \Cref{fig:approach}: 
\begin{itemize}
    \item[1. ] Discretize the continuous state space $\mathcal{X}$ to obtain state set $S$, since MDP and IMDP models require discrete state spaces.
    \item[2. ] For each discrete state, compute an interval of detection probabilities using $\mathcal{D}$ and binomial confidence intervals \cite{fleiss_statistical_2004}. This is a novel step where we differ from existing perception models~\cite{badithela_evaluation_2023,cleaveland_conservative_2023}.
    \item[3. ] Enlarge the detection probability ranges from Step 2 to account for the change in detection probability as the system's (continuous) state $x\in\mathcal{X}$ changes. This is required to fully bridge the gap between the continuous parts of~\Cref{eq:system} and our discrete models.
\end{itemize}

\subsection{Step 1: State Space Discretization}
\label{sec:stateSpaceBinning}
The first step is to partition the state space $\mathcal{X}$ of the system into a finite, discrete set of bins, which we denote as $S$. Each bin $s\in S$ corresponds to region $\mathcal{X}_i \subseteq \mathcal{X}$. With a slight abuse of notation, we say that $x \in s$ when $x \in \mathcal{X}_i$. \edit{Note that these bins are specifically for building $M_{per}$ and do \textbf{not} have to match those of $M_{cpl}$: we can use non-deterministic transitions to account for states in $M_{cpl}$ that overlap multiple states in $M_{per}$, and vice versa.}

In terms of performing the binning in practice, we see two main ways of doing this. First, one can divide the state space into a set of equally spaced hyper-rectangles (or intervals if the state space is 1D, rectangles in 2D, ...)\footnote{This is the approach taken by the previous works~\cite{badithela_evaluation_2023,cleaveland_conservative_2023}.}. In this case, the size of the hyper-rectangle serves as a hyper-parameter of the approach. The second way is to first select the desired number of bins \edit{for each dimension of the state space $\mathcal{X}$}, which we denote as \edit{$\mathcal{b_1},\hdots,\mathcal{b_n}$}, and then divide the state space such that each bin has an equal number of datapoints in it. For example, if one has a 1D system and $l$ datapoints (so $|\mathcal{D}| = l$), then one would sort the points $\{x_i,z_i\} \in \mathcal{D}$ by the $x_i$ values and assign each chunk of $\frac{\mathcal{l}}{n}$ to its own bin.

\looseness=-1
The choice of a binning method is subject to important trade-offs. The main and well-known one is between the granularity of binning and the accuracy/cost of verification. The main hyper-parameter is the number of bins used to partition $\mathcal{X}$. Using a lot of bins to create a fine partition will lead to an \mper with a larger state space, which is more expensive to model check. On the plus side, a model with more bins bridges the discrete-continuous gap better, since the behaviors of the perception component change continuously over the state space. %These trade-offs are well characterized in \textcolor{red}{the literature}.

We point out two other, subtler trade-offs at play that our discretization-based approach identifies. The existing perception abstraction approaches~\cite{badithela_evaluation_2023,cleaveland_conservative_2023} fail to account for these trade-offs, since they ignore the inherent uncertainty present in \mper --- and so they fail to make it conservative. The first tradeoff is that using more bins means that there will be fewer datapoints per bin. This will lead to larger probability intervals for each bin in Step 2 since there are fewer datapoints fed into the binomial confidence intervals. These larger detection probability intervals in turn will lead to a larger interval on the estimated safety probability $\hat{P}(\psi)$. This trade-off captures the intuition that having fewer datapoints to compute the transition probabilities of \mper generally results in a less precise model. 

Another subtle trade-off relates the bin sizes to the detection probability enlargement in Step 3. Using fewer, larger bins means that the detection probability intervals computed in Step 2 will have to be inflated by a larger amount to account for the perception system's performance changing over the continuous state space $\mathcal{X}$. This trade-off captures the intuition that a model with larger (discrete) bins will generally fail to capture the changes in the performance of the perception component over the continuous state space $\mathcal{X}$, which leads to a less accurate model. These two subtle tradeoffs are shown in~\Cref{fig:binWidthTradeoff}. 

\begin{figure}
    \centering
    \includegraphics[width=0.8\linewidth]{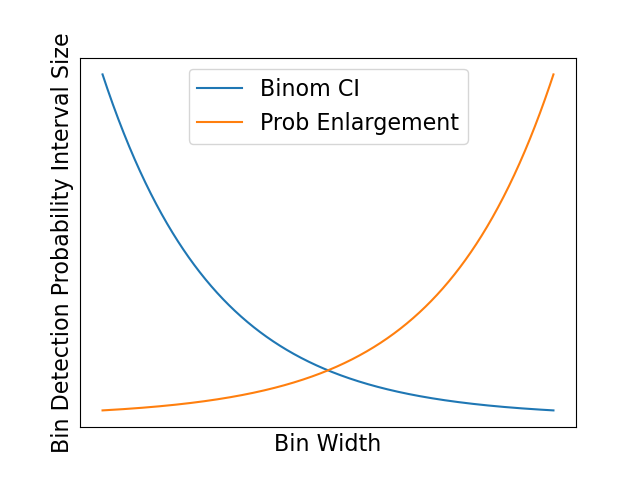}
    \caption{Larger discretization bins lead to smaller binomial confidence intervals (blue) but necessitate more probability enlargement (orange). %Relationship between bin width and bin detection probability interval size for the bin binomial confidence interval (blue) and bin probability enlargement (orange) steps.
    }
    \label{fig:binWidthTradeoff}
\end{figure}

\subsection{Step 2: Bin Probability Computation}
\label{sec:binProbComp}

The second step of our approach uses the datapoints in $\mathcal{D}$ to estimate the detection probabilities of the perception component $h$ within each bin $s \in S$. First, we denote the datapoints from $\mathcal{D}$ that lie in $s$ as 
\begin{equation}
    \mathcal{D}_{s} = \{ (x,z) \in \mathcal{D} ~|~ x \in s  \}
\end{equation}

Just as in prior works~\cite{badithela_evaluation_2023,cleaveland_conservative_2023}, we can then compute the empirical detection probability of the datapoints in $\mathcal{D}_{s}$:
\begin{align}
    \label{eq:binDetProb}
    \hat{p}_{s} = \frac{\sum_{(x,z) \in \mathcal{D}_{s}} \mathbbm{1}_{z = 1} }{|\mathcal{D}_{s}|}
\end{align}

At this step, the previous approaches in the literature would put these estimates as transition probabilities in \mper. However, these point estimates of the detection probability are not reliable: they are themselves random values subject to data collection noise, not equal to the true underlying probabilities. Thus, the conservatism of \mper is at risk. 

To alleviate this issue, we replace the point estimate $\hat{p}_{s}$ with a \emph{confidence interval (CI)}, denoted as $[\underline{\hat{p}}_{s},\bar{\hat{p}}_{s}]$ for bin $s$.\footnote{Using CIs to construct IMDPs has been explored in the areas of synthesis \cite{jiang_safe_2022} and active learning \cite{alasmari_quantitative_2022} (which inspires our approach), but not verification.}  There are several methods to obtain a confidence interval for the detection probability of the perception model within the bin, also known as a binomial confidence interval. We employ the Clopper-Pearson confidence interval because its coverage level is never less than the desired nominal coverage~\cite{clopper_use_1934}.
When constructing a confidence interval, one needs to select a confidence level $1-\alpha$. Using a larger value of $1-\alpha$ leads to larger probability intervals, but the proportion of bins in which the confidence interval contains the true detection probability will, in turn, be larger. % That may lead to the model checking having a higher safety confidence overall.

We argue that the choice of $1-\alpha$ should be determined by the desired confidence in the model checking result for the overall system \sys. If one wants a confidence of $1-\alpha_{MC}$ that the model checking results are conservative (i.e. ~\Cref{def:conservative_system} holds with probability $1-\alpha_{MC}$) and there are $|S|$ states in \mper, then one should set $1-\alpha=1-\alpha_{MC}/|S|$. Intuitively, this formula captures the fact that each bin's confidence interval has \edit{an} $\alpha_{MC}/|S|$ probability of \edit{not} containing the true detection probability, so by applying a union bound \cite{boole_mathematical_1847}, the probability of every bin's confidence interval containing the true detection probability is lower bounded by
\begin{align}
    1- \sum_{s \in S} \frac{\alpha_{MC}}{|S|} = 1-\alpha_{MC}
\end{align}

We will use this union-bound argument when proving the main theoretical result of this paper in \Cref{thm:thm1}.

\subsection{Step 3: Probability Enlargement}
\label{sec:ProbEnlargement}

As the third and final step, we expand the confidence intervals from Step 2 to account for the fact that the perception model's detection probability can change within each bin. Intuitively, the confidence intervals we computed in the previous step only account for the \textit{average} detection probability within each bin, so we cannot say if it covers \textit{all} the detection probabilities for every system state in the bin. Thus, the argument for the conservatism of \mper would be incomplete.

To account for these potential changes in intra-bin detection probabilities, we need a way of inflating the bin confidence intervals based on how much the underlying detection probability changes within each bin. We perform this inflation by estimating the derivative of the detection probability with respect to the system state using a \textit{surrogate model}. 

As the surrogate model, we fit a logistic regression model to the entire dataset $\mathcal{D}$. This model takes as input the system state $x$ and outputs the estimated probability that the perception component will detect the object, or $\hat{P}(z=1~|~x)$. We denote the function as $LR: \mathcal{X} \rightarrow [0,1]$. Then, we compute how much the estimated detection probability changes within each bin. For bin $s$, we compute this \textit{probability enlargement factor} as 
\begin{align} \label{eq:probEnlargeDelta}
    \delta^{p}_{s} := \left(\max_{x\in s} \; LR(x) \right) - \left(\min_{x\in s} \; LR(x) \right)
\end{align}

Now, the updated detection probability interval for bin $s$ is 
\begin{align}\label{eq:binDetIntOurs}
    [\underline{\hat{p}}_{s}-\delta^{p}_{s},\bar{\hat{p}}_{s}+\delta^{p}_{s}]
\end{align}

\begin{remark} If we fit a surrogate model to detection probabilities, why not use it to construct \mper? 
    We attempted to directly use the logistic regression, along with its associated confidence interval, to estimate the range of true detection probabilities within the bins. However, we found that the logistic regression confidence interval did not empirically achieve the coverage that we wanted. That is because the logistic regression CIs can be viewed as the CIs for the \textit{parameters} of how to map data onto the space of logistic models --- not for the data itself. % of the projection of the data distribution onto the space of the probability models spanned by the logistic models. 
    We demonstrate this in the experiments in \Cref{sec:expCARLA} and particularly \Cref{fig:carlaDetProbs}.
    %are only valid if the underlying function being estimated (in this case the relationship between continuous systems states $x\in\mathcal{X}$ and perception detection probability) \textcolor{red}{satisfy some criteria that Edgar told us some time ago. Need to dig those up...}
\end{remark}

\section{Conservative Verification Guarantee}
\label{sec:guarantee}

In this section, we state the main theoretical guarantee of this work. Specifically, we show that given the inputs to Problem~\ref{prop:ourProblem} and an \mper constructed using Steps 1--3 from \Cref{sec:approach}, verifying the resulting composed model \mdl will give a conservative estimate of the safety of the true system \sys:
\begin{align}
    P_{\sys}(\psi) \in [Pr_{\mdl}^{min}(\psi),Pr_{\mdl}^{max}(\psi)]
\end{align}

Before the theorem and its proof, we define some notation and a lemma. The proofs of the lemma and theorem can be found \edit{in the Appendix}.

\begin{definition}[State-level true detection probability]
    Let $P_{x}(h)$ for $x \in \mathcal{X}$ denote the (ground-truth) \emph{detection probability} of the system's perception component $h$ performing a positive detection when the system \sys is in state $x$.
\end{definition}

\begin{definition}[Bin-level true detection probability range]
    Let bin $s_i \in S$ denote a discrete perception bin with corresponding state space region $\mathcal{X}_i \subseteq \mathcal{X}$, as described in \Cref{sec:stateSpaceBinning}. Now let $P_{s_i}(h)$ denote the \emph{range of the detection probabilities} of perception $h$  over the states $x \in s_i$:
    \begin{align}
        P_{s_i}(h) = \left[\min_{x\in s_i} P_{x}(h), \max_{x \in s_i} P_{x}(h)\right]
    \end{align}
\end{definition}

\begin{lemma}[Bin-level detection probability conservatism]\label{lem:binConservatism}
    Consider the same inputs as in Problem~\ref{prop:ourProblem}. Let $[\underline{\hat{p}}_{s_{i}},\bar{\hat{p}}_{s_{i}}]$ denote the $1-\frac{\alpha_{MC}}{|S|}$ confidence interval of the perception component detection probability in bin $s_i$ (as described in \Cref{sec:binProbComp}). Assume\footnote{\edit{In practice, this assumption means that the true detection probabilities can be instantiated by some parameter values of Step 3's surrogate model, e.g., the logistic regression. In our case studies, this assumption has been largely upheld. Its theoretical investigation is left for future work.}} the probability enlargement factor $\delta_{s_i}$ as computed in \Cref{eq:probEnlargeDelta} upper-bounds the size of $P_{s_i}(h)$: % (i.e. the true variation in detection probability over the states corresponding to $s_i$):
    \begin{align}
        \delta_{s_i} \geq \left|\max_{x \in s_i} P_x(h) - \min_{x \in s_i} P_x(h) \right|
    \end{align}
    Then, with probability $1-\frac{\alpha_{MC}}{|S|}$, the interval from \Cref{eq:binDetIntOurs} contains the true detection probability interval $P_{\mper}(s_i)$. Precisely:
    \begin{align*}
        \underline{\hat{p}}_{s_{i}}-\delta_{s_{i}} \leq \min_{x\in s_i} P_{x}(h) \\
        \bar{\hat{p}}_{s_{i}}+\delta_{s_{i}} \geq \max_{x \in s_i} P_{x}(h)
    \end{align*}
    \editAllerton{holds with probability $1-\frac{\alpha_{MC}}{|S|}$.}
\end{lemma}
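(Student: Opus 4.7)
The essential idea is to route the argument through the marginal (data-distribution-averaged) detection probability in bin $s_i$, and then to sandwich the true per-state minimum and maximum around this marginal using the $\delta_{s_i}$ bound. First, I would introduce the quantity $\bar{P}_{s_i} := \mathbb{E}_{x\sim \mathcal{D}\mid s_i}[P_x(h)]$, which is the probability that a sample drawn into bin $s_i$ yields a positive detection. Since the samples in $\mathcal{D}_{s_i}$ are i.i.d.\ (as assumed in~\Cref{prop:ourProblem}), the indicator $\mathbbm{1}_{z=1}$ is Bernoulli with parameter exactly $\bar{P}_{s_i}$ by marginalizing $x$. Therefore the Clopper--Pearson construction applied at level $1-\alpha_{MC}/|S|$ guarantees
\[
\Pr\!\left(\bar{P}_{s_i}\in[\underline{\hat{p}}_{s_i},\bar{\hat{p}}_{s_i}]\right)\;\geq\;1-\tfrac{\alpha_{MC}}{|S|},
\]
invoking the standard conservative coverage property of Clopper--Pearson~\cite{clopper_use_1934}.

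Next I would condition on the (high-probability) event that the CI covers $\bar{P}_{s_i}$. The two inequalities to establish are
$\underline{\hat{p}}_{s_i}-\delta_{s_i}\leq \min_{x\in s_i} P_x(h)$ and $\bar{\hat{p}}_{s_i}+\delta_{s_i}\geq \max_{x\in s_i} P_x(h)$.
Observe the trivial sandwich $\min_{x\in s_i}P_x(h)\leq \bar{P}_{s_i}\leq \max_{x\in s_i}P_x(h)$, together with the hypothesis
$\max_{x\in s_i}P_x(h)-\min_{x\in s_i}P_x(h)\leq \delta_{s_i}$.
Combining these gives the two one-sided bounds $\bar{P}_{s_i}-\delta_{s_i}\leq \min_{x\in s_i}P_x(h)$ and $\max_{x\in s_i}P_x(h)\leq \bar{P}_{s_i}+\delta_{s_i}$. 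Chaining with the CI inclusion $\underline{\hat{p}}_{s_i}\leq \bar{P}_{s_i}\leq \bar{\hat{p}}_{s_i}$ yields the desired inequalities. Returning to the unconditional statement, the event holds with probability at least $1-\alpha_{MC}/|S|$, as claimed.

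\textbf{Main obstacle.} The delicate point is the first step: justifying that a single scalar Clopper--Pearson CI built from the bin's samples actually covers a meaningful target, since the underlying per-state probabilities $P_x(h)$ are heterogeneous across $x\in s_i$. The resolution is the marginalization argument above, which identifies the CI's estimand as the data-weighted average $\bar{P}_{s_i}$ rather than any particular $P_x(h)$. Everything after that is a short deterministic sandwich argument using the assumption on $\delta_{s_i}$; no additional probabilistic reasoning is needed. A minor bookkeeping point is that the lemma's hypothesis bounds the \emph{full} range $\max-\min$ by $\delta_{s_i}$, which is precisely what is needed to cover $\bar{P}_{s_i}$ from either side by a single $\delta_{s_i}$ (rather than needing a two-sided $2\delta_{s_i}$); I would make this explicit to avoid an off-by-a-factor-of-two mistake in the enlarged interval~\eqref{eq:binDetIntOurs}.
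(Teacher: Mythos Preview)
Your proposal is correct and follows essentially the same approach as the paper: the paper's proof also asserts that the Clopper--Pearson interval covers the \emph{average} detection probability in the bin with the stated confidence, and then invokes the $\delta_{s_i}$ assumption to pass from the average to the full range $[\min_{x\in s_i}P_x(h),\max_{x\in s_i}P_x(h)]$. Your write-up is in fact more explicit than the paper's on both counts---you spell out the marginalization argument identifying the CI's estimand as $\bar P_{s_i}$, and you carry out the sandwich $\min\leq\bar P_{s_i}\leq\max$ combined with $\max-\min\leq\delta_{s_i}$ in detail---whereas the paper states these steps in a single sentence each.
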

%\begin{proof}
%    By the use of the $1-\frac{\alpha_{MC}}{|S|}$ confidence interval when computing $[\underline{\hat{p}}_{s_{i}},\bar{\hat{p}}_{s_{i}}]$, we know that average detection probability of \mper over bin $s_i$ is contained in $[\underline{\hat{p}}_{s_{i}},\bar{\hat{p}}_{s_{i}}]$ with probability $1-\frac{\alpha_{MC}}{|S|}$. Next, by the assumption that $\delta_{s_i}$ upper bounds the size of $P_{\mper}(s_i)$, it follows that 
%    \begin{align*}
%        \underline{\hat{p}}_{s_{i}}-\delta^{p}_{s_{i}} \leq \min_{x\in s_i} P_{\mper}(x) \\
%        \bar{\hat{p}}_{s_{i}}+\delta^{p}_{s_{i}} \geq \max_{x \in s_i} P_{\mper}(x)
%    \end{align*}
%\end{proof}

\begin{theorem}[Model-level safety probability conservatism]\label{thm:thm1}
    Consider the same inputs as in Problem~\ref{prop:ourProblem}, \edit{the assumption of Lemma~\ref{lem:binConservatism}}, and perception model \mper with the transition probabilities from \Cref{eq:binDetIntOurs} using confidence level $1-\frac{\alpha_{MC}}{|S|}$. Letting $\mdl=\mcp~||~\mper$, it holds that
    \editAllerton{\begin{align}
        P\big(P_{\sys}(\psi) \in [Pr_{\mdl}^{min}(\psi),Pr_{\mdl}^{max}(\psi)]\big) \geq 1-\alpha_{MC}
    \end{align}
    }
\end{theorem}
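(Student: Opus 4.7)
The plan is to reduce the theorem to two ingredients: (i) the per-bin detection-probability inclusion from Lemma~\ref{lem:binConservatism}, amplified to a global event by a union bound; and (ii) the soundness of IMDP model checking applied to an appropriately constructed satisfaction relation between \sys and \mdl.

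First, I would apply Boole's inequality over the $|S|$ bins of \mper. Since each inflated interval $[\underline{\hat{p}}_{s_i}-\delta^{p}_{s_i},\bar{\hat{p}}_{s_i}+\delta^{p}_{s_i}]$ contains the entire true range $P_{s_i}(h)$ with probability at least $1-\alpha_{MC}/|S|$, the intersection event $E$ that \emph{all} bins satisfy this inclusion has probability at least $1-\alpha_{MC}$. For the remainder I would condition on $E$.

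Second, under $E$, I would argue that \mper is implemented by the induced perception MDP of \sys in the sense of Definition~\ref{def:consPerOurs}: for every continuous $x \in s_i$, the true detection probability $P_x(h)$ lies in \mper's interval at $s_i$, so the inherent intra-bin and across-time variation of the perception behavior can be realized by some IMDP scheduler. Separately, \mcp is conservative by Definition~\ref{def:McpCons}. I would then show that conservatism is preserved under the MDP--IMDP parallel composition of Definition~\ref{def:imdpComp} by constructing an explicit satisfaction relation $R \subseteq \mathcal{X} \times (S_{\mcp}\times S_{\mper})$ that sends each continuous state $x$ of \sys to the pair of discrete bins containing it. Verifying the three clauses of Definition~\ref{def:consPerGeneral} then reduces to the two component-level guarantees: synchronization on perception actions in $\alpha_{per}$ is handled by the fact that both components share the same true detection outcome, and the continuous dynamics $f$ are absorbed by the reachability non-determinism in $\alpha_{reach}$. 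With \sys implementing \mdl, the standard IMDP model-checking bounds yield $P_{\sys}(\psi) \in [Pr_{\mdl}^{min}(\psi), Pr_{\mdl}^{max}(\psi)]$ pointwise on $E$; deconditioning gives the claimed $1-\alpha_{MC}$ probability bound.

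The main obstacle is the composition step: the component-level guarantees are local, but the satisfaction relation must glue them across potentially mismatched \mcp and \mper partitions while tracking the synchronization of perception actions with reachability non-determinism. Two subtleties deserve particular care: ensuring that the i.i.d.\ hypothesis behind the Clopper--Pearson intervals remains appropriate when future applications draw data from closed-loop executions of \sys (here the argument depends on $\mathcal{D}$ being an i.i.d.\ sample as assumed in Problem~\ref{prop:ourProblem}), and the standing assumption of Lemma~\ref{lem:binConservatism} that the logistic-regression enlargement $\delta^{p}_{s_i}$ truly upper-bounds the spatial variation of $P_x(h)$ in each bin. Once those conditions are established, the safety-LTL sandwiching conclusion is a direct application of IMDP semantics.
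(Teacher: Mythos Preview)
Your proposal is correct and follows essentially the same two-step structure as the paper's proof: a union bound over the $|S|$ bins via Lemma~\ref{lem:binConservatism} to obtain the global coverage event with probability $\geq 1-\alpha_{MC}$, and then, conditional on that event, an argument that the behaviors of \sys are contained in those of $\mdl=\mcp\,||\,\mper$ by combining the per-bin detection-probability inclusion with the conservatism of \mcp (Definition~\ref{def:McpCons}). The paper carries out the second step by a direct next-state-distribution containment argument rather than explicitly invoking the satisfaction-relation machinery of Definitions~\ref{def:consPerGeneral}/\ref{def:consPerOurs}, but this is a difference in formal packaging, not in substance.
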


\section{Case Studies}
\label{sec:caseStudies}

Our experimental evaluation has two goals: explore how the bin size hyperparameter and how the probability enlargement step affect the conservatism of our approach on a fully synthetic case study. We use an automatic emergency braking system (AEBS) on an autonomous car to evaluate our approach. Additionally, we apply our approach to an autonomous car example in the CARLA simulator \cite{dosovitskiy_carla_2017}. \edit{For IMDP verification, we use PRISM model checker 4.8.1 with native IMDP support.}

\subsection{System: Emergency Braking with Object Detection}

Consider an autonomous car approaching a stationary obstacle. The car is equipped with a controller that issues braking commands on detection of the obstacle and a perception LEC that detects obstacles. The safety goal of the car is to fully stop before hitting the obstacle. We use the following car dynamics, controller, and perception.

A discrete kinematics model with time step \timestep represents the velocity ($v$) and position ($d$, same as the distance to the obstacle) of the car at time $t$:
\begin{equation} \label{eq:aebs-dyn}
    d[t] = d[t-\timestep] - \timestep \times v[t-\timestep], \quad
    v[t] = v[t-\timestep] - \timestep \times b[t-\timestep],
\end{equation}
where $b[t]$ is the braking command (a.k.a. the ``braking power'', BP) at time $t$.

We use an \textit{Advanced Emergency Braking System (AEBS)} \cite{lee_aebs_2011} which uses two metrics to determine the BP: time to collide ($TTC$) and warning index ($WI$). The $TTC$ is the amount of time until a collision if the current velocity is maintained. The $WI$ represents how safe the car would be in the hands of a human driver (positive is safe, negative is unsafe).
\begin{equation}
    TTC = \frac{d}{v},\quad WI = \frac{d-d_{br}}{ v  T_{h}}, \quad d_{br} = v T_{s}+ \frac{u v^2}{2a_{max}}, % denominator used to be d_w - d_br  
\end{equation}
where $d_{br}$ is the braking-critical distance, $T_{s}$ is the system response delay (negligible), $T_{h}$ is the average driver reaction time (set to 2s), $u$ is the friction scaling coefficient (taken as 1), and $a_{max}$ is the maximum deceleration of the car.

Upon detecting the obstacle, the AEBS chooses one of three BPs: no braking (BP of 0), light braking ($B_1$), and maximum braking ($B_2=a_{max}$). If the obstacle is not detected, no braking occurs. The  BP value, $b$, is determined by $WI$ and $TTC$ crossing either none, one, or both of the fixed thresholds $C_1$ and $C_2$:
\begin{equation}
\begin{split}
    %\text{Condition for no braking: }
   &WI >C_1 \land TTC > C_2 \implies  b = 0 \\
   %\text{Condition for light braking:}
    &(WI \leq C_1 \land  TTC > C_2) \lor (WI > C_1 \land  TTC \leq C_2) \\ 
    &\quad\quad\implies b = B_1  \\
    %\text{Condition for maximum braking: }
    &WI \leq C_1 \land  TTC \leq C_2 \implies b = B_2
\end{split}
\end{equation}

To detect the obstacle, the car uses a classification variant of the deep neural network YoloNetv11~\cite{yolo11_ultralytics},
as it can run at high frequencies. We say a low-level detection of the obstacle occurs when Yolo detects the obstacle.

The safety property \oursafeprop is the absence of a collision, specified in LTL as 
\begin{equation}
    \oursafeprop :=  \square \; \left ( d > \carlen \right ),
\label{eq:aebs_safety_prop}
\end{equation}
where \carlen is the minimum allowed distance to the obstacle and is taken to be 5m. Note that in this model the car eventually stops or collides, so for any initial condition there is an upper bound on the number of time steps.

\subsection{Comparative Abstraction Methods}

We compare our approach to a few baseline/ablation methods of creating \mper summarized in Table~\ref{tab:perceptionApproachesEXps}: a baseline approach~\cite{badithela_evaluation_2023} that doesn't use confidence intervals, our approach with no probability enlargement (this is our approach, but skipping~\Cref{sec:ProbEnlargement}), and an approach that uses confidence intervals from the logistic regression we employ in~\Cref{sec:ProbEnlargement} to compute the bin probability intervals \editAllerton{(this is our approach, but skipping~\Cref{sec:binProbComp})}, and the approach that uses the ground truth perception detections probabilities (for the synthetic case study only). We expect our approach to be more conservative than all of these due to its guarantees.

%\IR{should we consider the logistic regression as another baseline, as we discussed in sec 5.3?} 

\begin{table*}[]
    \centering
    \begin{tabular}{|c|c|}
        \hline \textbf{Model Symbol} & \textbf{Model Description} \\
        \hline \modelNoCI & Baseline using $\hat{p}_{s}$ from~\Cref{eq:binDetProb} as detection probabilities~\cite{badithela_evaluation_2023} \\
        \hline \modelGTPer & Uses the ground-truth detection probabilities \\
        \hline \modelLogRegCI & Approach using the logistic regression confidence intervals \\
        \hline \modelNoDeriv & Our approach with no probability enlargement \\
        \hline \modelOurs & Our approach: enlargement with derivatives from logistic regression   \\ 
        \hline        
    \end{tabular}
    \vspace{1mm}
    \caption{Perception abstraction approaches compared in our experiments. }
    \label{tab:perceptionApproachesEXps}
\end{table*}

\subsection{Synthetic perception case study}
\label{sec:expSimulated}

For the first case study, we replaced YoloNetv11 with a simulated stochastic perception model. The purpose is to experiment with the fully known ground truth. This true perception model uses a dynamically weighted coin flip to determine detections and no detections, where the probability of detection depends on the distance from the car to the obstacle. This dependency is modeled with a logistic function, computing the detection probability based on the obstacle distance:
\begin{align}
    p(x) = \frac{1}{1+e^{-k(x-x_0)}}
\end{align}
where $k=-0.1$ and $x_0=35$. A plot of this function over the range of distances we considered for the problem is shown in~\Cref{fig:perceptionModel}.

\begin{figure}
    \centering
    \includegraphics[width=0.8\linewidth]{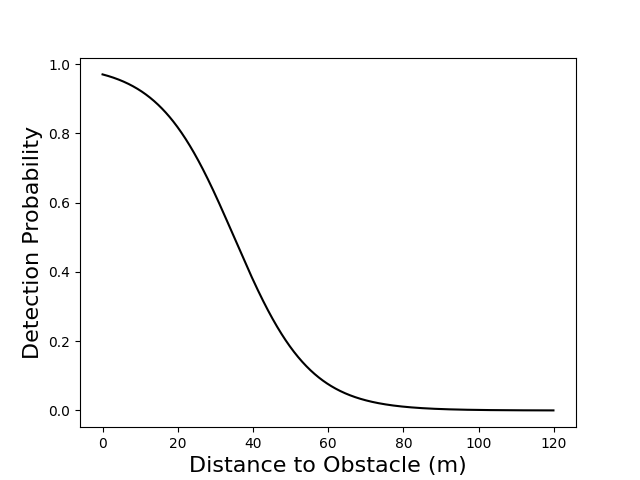}
    \caption{Detection probabilities for the synthetic perception model.}
    \label{fig:perceptionModel}
    \vspace{-3mm}
\end{figure}

We investigate how the bin width affects the safety estimates of our method. Additionally, we vary the probability enlargement by a constant multiple to investigate how that aspect of our approach contributes to the safety estimates. \edit{For brevity, we omit the reporting of binning algorithms that determine bin boundaries, which produced little to no effect.}

\subsubsection{Bin Width Effect: }
For the first experiment, we investigate how the width of the bins affects the conservatism of our perception abstractions\footnote{\editAllerton{We note that the bin size, both for \mper and \mcp also affects the runtime of the model checking. See e.g. \cite{cleaveland_monotonic_2022} for an investigation of that. }}. To do this, we vary the width of the bins to see how that affects the safety probabilities of the various perception abstraction generation methods: \modelOurs, \modelNoDeriv,\modelNoCI, \modelLogRegCI, and \modelGTPer. The results for initial distance 50$m$ and initial speed $20$m/s are shown in~\Cref{fig:simulatedAEBSResultsBinSweep}. First, note that the range of safety probabilities from \modelOurs (green) contains the empirical safety probability across the range of bin sizes, highlighting the conservatism of our approach. Second, the safety ranges from \modelNoDeriv (red) contain the empirical safety probability only for very small bin sizes, but not for moderate to large ones. This highlights the necessity of the probability enlargement from~\Cref{sec:ProbEnlargement}. Additionally, the gap between \modelOurs and \modelNoDeriv grows as the bin size increases, which is as expected because the effect of the probability enlargement grows as the bin size increases. Third, note that the behavior of \modelNoCI (yellow) resembles that of \modelNoDeriv, just with smaller safety probability ranges. Note that \modelNoCI still results in a range of safety probabilities (rather than a point estimate) due to non-determinism in \mcp. Finally, \modelLogRegCI (cyan) produces nearly identical results to \modelGTPer. This is because the detection probability is exactly determined by a logistic function, (see~\Cref{fig:perceptionModel}), so the logistic regression is almost perception accurate. Our CARLA case study shows that \modelLogRegCI underperforms when it does not match the model class of the true perception model. \edit{While the safety probability bounds from \modelOurs are quite conservative for larger bin widths, we note that the bounds are conservative for \textit{all} bin widths, while the other approaches produce probability bounds that do not contain the true safety probability. The reduction of this conservatism is left for future work.}

Overall, this experiment highlights that our approach produces conservative models regardless of bin size --- while the existing approaches do not. \footnote{\editAllerton{Because our method is conservative for any bin size, the user can test a range of bin sizes and use whichever one gives the tightest safety probability interval.}}

%\begin{figure}
%    \centering
%    \includegraphics[width=0.5\linewidth]{}
%    \caption{Safety probability estimates for the simulated AEBS case study over a range of bin widths.}
%    \label{fig:simulatedAEBSResultsBinSweep}
%\end{figure}

\begin{figure*}
    \centering
    \begin{subfigure}[t]{0.555\textwidth}
        \centering
        \includegraphics[width=\textwidth]{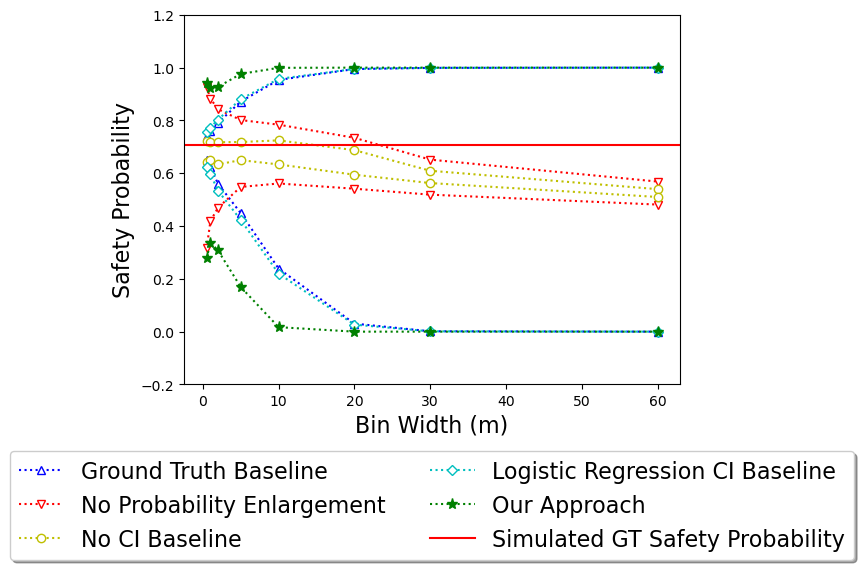}
        \caption{Safety probability estimates from several approaches over a range of bin widths.}
        \label{fig:simulatedAEBSResultsBinSweep}
    \end{subfigure}%
    ~ 
    \begin{subfigure}[t]{0.435\textwidth}
        \centering
        \includegraphics[width=\textwidth]{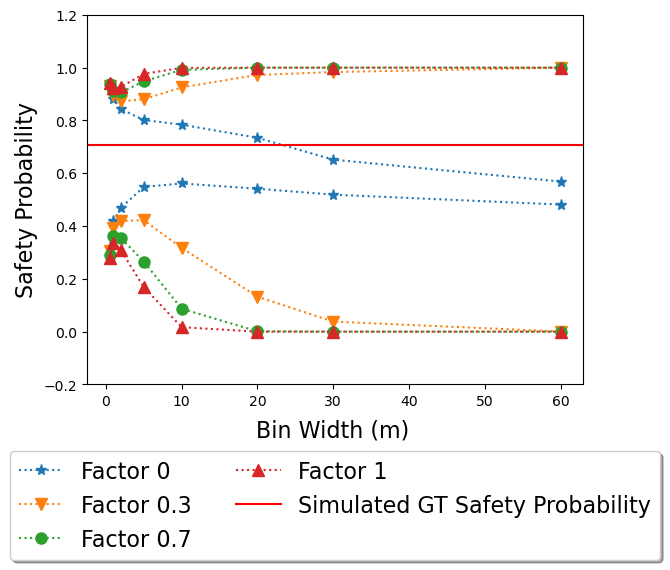}
        \caption{Our safety probability estimates over a range of bin widths and enlargement factors.}
        \label{fig:simulatedAEBSResultsEnlargeMultSeep}
    \end{subfigure}
    \caption{Results of the synthetic perception case study.}
    \vspace{-3mm}
\end{figure*}

\subsubsection{Probability Enlargement Factor: }

This experiment aims to show that our approach can flexibly incorporate the user's risk preference. To this end, we vary how much we enlarge the bin detection probability intervals by a multiplicative factor $\probEnlargeFactor\in [0,1]$. So the detection probability interval for bin $s_i$ from \Cref{eq:binDetIntOurs} now becomes
\begin{align}
    [\underline{\hat{p}}_{s_{i}}-\probEnlargeFactor \delta^{p}_{s_{i}},\bar{\hat{p}}_{s_{i}}+\probEnlargeFactor \delta^{p}_{s_{i}}]
\end{align}

\looseness=-1
Using the same experimental conditions as before, we tested our approach using $\probEnlargeFactor\in \{0,0.3,0.7,1\}$, where $\probEnlargeFactor=1$ corresponds to our full approach \modelOurs, and $\probEnlargeFactor=0$ corresponds to the no probability enlargement \modelNoDeriv. The results of this experiment are shown in~\Cref{fig:simulatedAEBSResultsEnlargeMultSeep}. First, note that the safety intervals are monotonically increasing as the factor increases. That is because the larger probability enlargement factor increases the detection probability intervals in \mper. For \editAllerton{$\probEnlargeFactor \in \{0.3,0.7 \}$}, the approach remains empirically conservative. We conclude that bridging the gap between the probability enlargement needed for theoretical and empirical conservatism is an interesting avenue for future research.

%\begin{figure}
%    \centering
%    \includegraphics[width=0.5\linewidth]{}
%    \caption{Safety probability estimates for the simulated AEBS case study over a range of binwidths and probability enlargement multipliers.}
%    \label{fig:simulatedAEBSResultsEnlargeMultSeep}
%\end{figure}

\subsection{CARLA case study}
\label{sec:expCARLA}

\looseness=-1
Here, we evaluate the feasibility of our approach for a realistic object detection LEC in a high-fidelity simulation. First, we set up the AEBS controller in the autonomous driving simulator CARLA~\cite{dosovitskiy_carla_2017}, using a red Toyota Prius as the obstacle and a classification variant of Yolov11 as the object detector. Then we performed two rounds of data collection in CARLA: (i) constant-$v$ runs to gather image-distance data to build our perception abstractions, (ii) AEBS-controlled runs to approximate the true safety probability of the AEBS controller.

For the constant-$v$ runs, we ran $500$ trials with a speed of $6$m/s and initial distance around $210$m, which generated $181000$ perception datapoints. \Cref{fig:carlaDetProbs} shows 5m-wide binned estimates of the YoloNet detection probabilities from the perception data. For the AEBS-controlled runs, we ran $500$ trials with an initial distance $50$m and an initial speed $20$m/s. This resulted in $330$ crashes, which yields a $[0.62, 0.70]$ 95\%-CI for the true collision chance.

%\begin{figure}
%    \centering
%    \includegraphics[width=0.5\linewidth]{}
%    \caption{Empirical detection probability versus distance for our YoloNetv11 image classifier in CARLA.}
%    \label{fig:carlaDetProbs}
%\end{figure}

\begin{figure*}
    \centering
    \begin{subfigure}[t]{0.47\textwidth}
        \centering
        \includegraphics[width=0.9\textwidth]{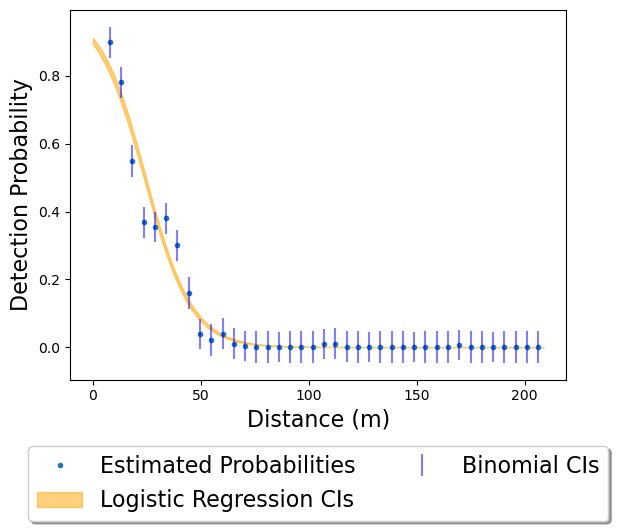}
        % \vspace{10mm}
        \caption{Detection probabilities vs. distance to obstacle for our Yolov11  LEC.}
        \label{fig:carlaDetProbs}
    \end{subfigure}%
    ~~~
    \hspace{1mm}
    \begin{subfigure}[t]{0.47\textwidth}
        \centering
        \includegraphics[width=\textwidth]{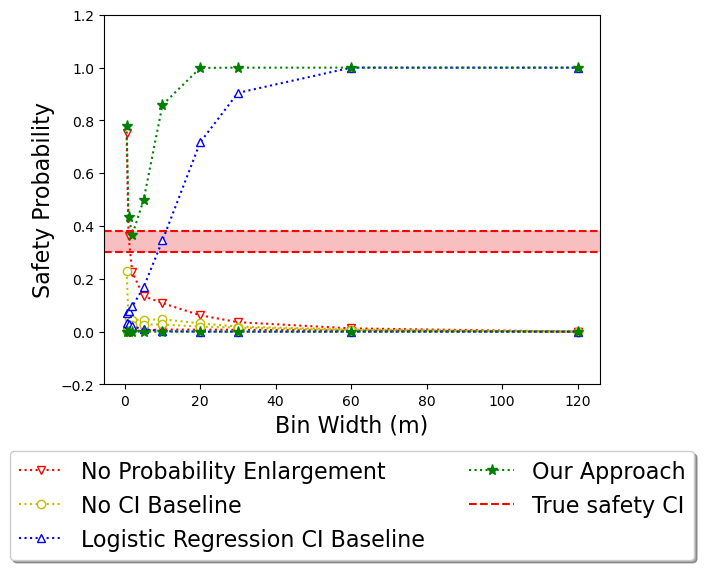}
        \caption{Safety probability estimates over a range of bin widths.}
        \label{fig:CARLAAEBSResultsBinSweep}
    \end{subfigure}
    \caption{Results of the CARLA case study.}
    \vspace{-3mm}
\end{figure*}

We used the perception data to construct models using our approach \modelOurs, our approach with no probability enlargement \modelNoDeriv, the approach from \cite{badithela_evaluation_2023} with no confidence intervals \modelNoCI, and the approach using the logistic regression confidence intervals \modelLogRegCI. Note that we cannot construct \modelGTPer since we don't have access to the ground truth detection probabilities.

We first vary the bin width parameter used to form the models to determine how it affects the estimated safety probability intervals. The results are shown in \Cref{fig:CARLAAEBSResultsBinSweep}. Note that \edit{for every bin width} the safety probability intervals from our approach contain the confidence interval for the true underlying safety probability. Also, the probabilities from \modelNoDeriv are only empirically valid for small bins, as we witnessed in~\Cref{sec:expSimulated}. Additionally, the probabilities from \modelNoCI never contain the true safety probability. Finally, the safety probabilities from \modelLogRegCI are not valid for small bin widths. This highlights the drawbacks of solely relying on logistic regression to estimate detection probabilities in practice.% when the detection probabilities do not come form a logistic function. 

%\begin{figure}
%    \centering
%    \includegraphics[width=0.5\linewidth]{}
%2    \caption{Safety probability estimates for the CARLA AEBS case study over a range of bin widths.}
%    \label{fig:CARLAAEBSResultsBinSweep}
%\end{figure}

\section{Conclusion and Future Work}
\label{sec:discussion}

\looseness=-1
This paper has presented a novel method for constructing provably conservative probabilistic abstractions of perception and sensing components in closed-loop autonomous systems. %These components can be composed with conservative control and dynamics models to derive conservative safety estimates of the system using existing model checking methods. 
Our abstraction method combines (i) binomial confidence intervals to account for the inherent uncertainty of data-driven estimation and (ii) logistic regression to account for the perception behaviors changing continuously in the system's state space. We prove that when our perception/sensing abstractions are composed with conservative control and dynamics models, the resulting model will lead to conservative verification with high probability. We demonstrate the conservatism of our approach in case studies of an AEBS system with both synthetic perception models and in CARLA using a realistic object detector. 

Future work can focus on reducing conservatism by better estimating intra-bin changes in detection probabilities of the perception component. For example, more sophisticated surrogate models for probability. \editAllerton{Additionally, we will investigate more expressive \mper models, such as conditioning detections on previous \mper outputs, and explore abstractions for multi-class LECs. Finally, we will extend our approach to specifications beyond LTL safety. }

\bibliographystyle{splncs04}

\bibliography{ivan-autogenerated,literature}

\appendix
% \pagebreak

\section{Appendix}
\setcounter{lemma}{0}
\setcounter{theorem}{0}

\subsection{Proof of~\Cref{lem:binConservatism}}
\label{app:lemma1}

\begin{lemma}[Bin-level detection probability conservatism]\label{lem:binConservatism_app}
    Consider the same inputs as in Problem~\ref{prop:ourProblem}. Let $[\underline{\hat{p}}_{s_{i}},\bar{\hat{p}}_{s_{i}}]$ denote the $1-\frac{\alpha_{MC}}{|S|}$ confidence interval of the perception component detection probability in bin $s_i$ (as described in \Cref{sec:binProbComp}). Assume the probability enlargement factor $\delta_{s_i}$ as computed in \Cref{eq:probEnlargeDelta} upper-bounds the size of $P_{s_i}(h)$: % (i.e. the true variation in detection probability over the states corresponding to $s_i$):
    \begin{align}
        \delta_{s_i} \geq \left|\max_{x \in s_i} P_x(h) - \min_{x \in s_i} P_x(h) \right|
    \end{align}
    Then, with probability $1-\frac{\alpha_{MC}}{|S|}$, the interval from \Cref{eq:binDetIntOurs} contains the true detection probability interval $P_{s_i}(h)$. Precisely:
    \begin{align*}
        \underline{\hat{p}}_{s_{i}}-\delta_{s_{i}} \leq \min_{x\in s_i} P_{x}(h) \\
        \bar{\hat{p}}_{s_{i}}+\delta_{s_{i}} \geq \max_{x \in s_i} P_{x}(h)
    \end{align*}
\end{lemma}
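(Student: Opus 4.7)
\textbf{Proof plan for Lemma~\ref{lem:binConservatism_app}.} My plan is to introduce the (unobserved) average detection probability inside the bin as the pivotal quantity that the Clopper--Pearson CI is actually covering, and then use the enlargement assumption to relate that average to the min and max of $P_x(h)$ over the bin.

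First, I would define $\bar{P}_{s_i} := \mathbb{E}_{x\sim\pi_{s_i}}[P_x(h)]$, where $\pi_{s_i}$ is the data-generating distribution over states in $s_i$ (i.e.\ the distribution from which the $x_j$'s in $\mathcal{D}_{s_i}$ are drawn). Because each datapoint $(x_j,z_j)\in\mathcal{D}_{s_i}$ is i.i.d., the indicators $\mathbbm{1}[z_j=1]$ are i.i.d.\ Bernoulli with marginal success probability exactly $\bar{P}_{s_i}$. Thus the Clopper--Pearson construction of $[\underline{\hat{p}}_{s_i},\bar{\hat{p}}_{s_i}]$ at nominal level $1-\alpha_{MC}/|S|$ gives
\begin{equation*}
\Pr\bigl(\underline{\hat{p}}_{s_i} \le \bar{P}_{s_i} \le \bar{\hat{p}}_{s_i}\bigr) \;\ge\; 1-\frac{\alpha_{MC}}{|S|}.
\end{equation*}
This is the only place randomness enters the argument; from here on, the reasoning is deterministic on the high-probability event that the CI covers $\bar{P}_{s_i}$.

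Next, I would observe the sandwiching inequalities
\begin{equation*}
\min_{x\in s_i} P_x(h) \;\le\; \bar{P}_{s_i} \;\le\; \max_{x\in s_i} P_x(h),
\end{equation*}
which hold because $\bar{P}_{s_i}$ is an expectation of $P_x(h)$ over $x\in s_i$. Combined with the hypothesis $\delta_{s_i} \ge \max_{x\in s_i} P_x(h) - \min_{x\in s_i} P_x(h)$ (this is the enlargement assumption applied in both directions), I get on the CI-coverage event the two chains
\begin{align*}
\min_{x\in s_i} P_x(h) \;&\ge\; \max_{x\in s_i} P_x(h) - \delta_{s_i} \;\ge\; \bar{P}_{s_i} - \delta_{s_i} \;\ge\; \underline{\hat{p}}_{s_i} - \delta_{s_i}, \\
\max_{x\in s_i} P_x(h) \;&\le\; \min_{x\in s_i} P_x(h) + \delta_{s_i} \;\le\; \bar{P}_{s_i} + \delta_{s_i} \;\le\; \bar{\hat{p}}_{s_i} + \delta_{s_i}.
\end{align*}
These are exactly the two inclusions claimed in the lemma. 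Since the CI covers $\bar{P}_{s_i}$ with probability at least $1-\alpha_{MC}/|S|$, both conclusions hold jointly with at least that probability, completing the proof.

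The main subtlety I anticipate is the very first step: making it explicit that Clopper--Pearson gives a valid CI for the \emph{marginal} success probability $\bar{P}_{s_i}$ even though each sampled $x_j$ in the bin has its own conditional success probability $P_{x_j}(h)$. This is what justifies treating the indicators as i.i.d.\ Bernoulli$(\bar{P}_{s_i})$, and it is exactly why the CI by itself is not enough to bound $\min$ or $\max$ of $P_x(h)$ across the bin --- the role of $\delta_{s_i}$ is precisely to close that gap. Once that observation is in place, the rest of the argument is the straightforward triangle-style sandwich above.
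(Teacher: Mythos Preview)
Your proposal is correct and follows essentially the same two-step approach as the paper's proof: first argue that the Clopper--Pearson interval covers the \emph{average} detection probability in the bin with the stated confidence, then use the enlargement assumption to extend coverage to the min and max. Your version is simply more explicit, spelling out the sandwich $\min \le \bar{P}_{s_i} \le \max$ and the resulting inequality chains that the paper's proof leaves implicit.
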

\begin{proof}
    By the use of the $1-\frac{\alpha_{MC}}{|S|}$ confidence interval when computing $[\underline{\hat{p}}_{s_{i}},\bar{\hat{p}}_{s_{i}}]$, we know that average detection probability of \mper over bin $s_i$ is contained in $[\underline{\hat{p}}_{s_{i}},\bar{\hat{p}}_{s_{i}}]$ with probability $1-\frac{\alpha_{MC}}{|S|}$. Next, by the assumption that $\delta_{s_i}$ upper bounds the size of $P_{s_i}(h)$, it follows that 
    \begin{align*}
        \underline{\hat{p}}_{s_{i}}-\delta_{s_{i}} \leq \min_{x\in s_i} P_{x}(h) \\
        \bar{\hat{p}}_{s_{i}}+\delta_{s_{i}} \geq \max_{x \in s_i} P_{x}(h)
    \end{align*}
\end{proof}

\subsection{Proof of~\Cref{thm:thm1}}
\label{app:thm1}

\begin{theorem}[Model-level safety probability conservatism]\label{thm:thm1_app}
    Consider the same inputs as in Problem~\ref{prop:ourProblem}, \edit{the assumption of Lemma~\ref{lem:binConservatism}}, and perception model \mper with the transition probabilities from \Cref{eq:binDetIntOurs} using confidence level $1-\frac{\alpha_{MC}}{|S|}$. Letting $\mdl=\mcp || \mper$, it holds that
    \begin{align}
        P_{\sys}(\psi) \in [Pr_{\mdl}^{min}(\psi),Pr_{\mdl}^{max}(\psi)]
    \end{align}
\end{theorem}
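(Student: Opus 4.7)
The plan is to combine Lemma~\ref{lem:binConservatism} with a union bound over the bins of \mper, then show that when every bin's interval is conservative, the true system MDP implements the composed IMDP \mdl in the sense of \Cref{def:consPerOurs}, and finally invoke the standard IMDP satisfaction semantics to transfer the containment of transition probabilities into containment of the safety probability.

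First I would define the event $E_{s_i}$ that the (possibly enlarged) confidence interval $[\underline{\hat p}_{s_i}-\delta^{p}_{s_i},\; \bar{\hat p}_{s_i}+\delta^{p}_{s_i}]$ simultaneously lower-bounds $\min_{x\in s_i} P_x(h)$ and upper-bounds $\max_{x\in s_i} P_x(h)$. Lemma~\ref{lem:binConservatism} gives $\Pr(E_{s_i}) \geq 1 - \alpha_{MC}/|S|$. Taking a union bound over the $|S|$ bins of \mper yields
\begin{equation}
\Pr\Big(\bigcap_{s_i \in S} E_{s_i}\Big) \;\geq\; 1 - \sum_{s_i \in S}\frac{\alpha_{MC}}{|S|} \;=\; 1-\alpha_{MC}.
\end{equation}
For the remainder I condition on the intersection event, and must show that under this event $P_{\sys}(\psi) \in [Pr_{\mdl}^{min}(\psi),Pr_{\mdl}^{max}(\psi)]$.

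Next I would build a ``ground truth'' MDP $M_{\mathrm{sys}}^{\mdp}$ over the state space of \mdl that abstracts \sys exactly: each abstract state $(s_{\mcp},s_{\mper})$ lifts to the true continuous states it contains, perception actions $\{0,1\}$ are emitted with the true $P_x(h)$, and reachability actions are chosen non-deterministically to cover every continuous successor $f(x,u)$. By \Cref{def:McpCons} the controller--plant portion covers every continuous successor, so this MDP faithfully captures \sys. Conditioned on $\bigcap_i E_{s_i}$, for any state $(s_{\mcp},s_{\mper})$ and any $x$ in its region the true detection probability $P_x(h)$ lies in the interval assigned by \mper; hence for every transition, the transition probability of $M_{\mathrm{sys}}^{\mdp}$ lies in the interval assigned by \mdl. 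This is precisely the condition of \Cref{def:consPerOurs}: $M_{\mathrm{sys}}^{\mdp}$ implements \mdl (using the identity satisfaction relation $R$ and $\varphi(s)=\mathbbm{1}_s$, exactly as flagged after Def.~\ref{def:consPerOurs}).

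Finally I would invoke the standard consequence of the IMDP satisfaction relation (e.g., from~\cite{delahaye_decision_2011}): if an MDP implements an IMDP, then for any LTL safety formula $\psi$ and any scheduler of the MDP there exists a scheduler of the IMDP realising the same path distribution, and conversely. Consequently every achievable $Pr_{M_{\mathrm{sys}}^{\mdp}}^{\sigma}(\psi)$ is sandwiched between $Pr_{\mdl}^{min}(\psi)$ and $Pr_{\mdl}^{max}(\psi)$. Since $P_{\sys}(\psi)$ equals $Pr_{M_{\mathrm{sys}}^{\mdp}}^{\sigma^{\star}}(\psi)$ for the scheduler $\sigma^{\star}$ induced by the true dynamics, the inclusion follows. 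Combining with the union bound gives the advertised $1-\alpha_{MC}$ guarantee.

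The main obstacle I anticipate is the middle step: carefully constructing the ground-truth MDP $M_{\mathrm{sys}}^{\mdp}$ and verifying the three clauses of \Cref{def:consPerOurs} at the composed level. In particular one must (i) match perception actions $a_{per}\in\alpha_{per}$ between the synchronised \mcp-component and the \mper-component used in \Cref{def:imdpComp}, and (ii) argue that the non-determinism over reachability actions $\alpha_{reach}$ supplied by \mcp is rich enough to realise the true continuous transition from every $x \in s$. Both points reduce to Definitions~\ref{def:mcp} and~\ref{def:McpCons}, but they require bookkeeping across the product construction; the probabilistic part, once conditioned on $\bigcap_i E_{s_i}$, is then immediate from the interval containment.
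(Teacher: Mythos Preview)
Your proposal is correct and follows essentially the same two-step strategy as the paper: (i) apply Lemma~\ref{lem:binConservatism} together with a union bound over the $|S|$ bins to obtain the $1-\alpha_{MC}$ event, and (ii) argue that on this event the true one-step transition distribution of \sys is contained in the interval transition function of $\mdl=\mcp\,\|\,\mper$, using the conservatism of \mcp (Def.~\ref{def:McpCons}) for the reachability part and the bin containment for the perception part. The only notable difference is packaging: you make the second step explicit by constructing a ground-truth MDP and invoking the state-matched satisfaction relation (Def.~\ref{def:consPerOurs}) plus the standard implementation-implies-probability-containment result, whereas the paper argues the transition-distribution containment directly at the level of a single continuous state $x$ and leaves the passage to $Pr_{\mdl}^{\min}/Pr_{\mdl}^{\max}$ implicit.
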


\begin{proof}
    Our proof has two general steps. First, we show that if the perception bin detection probability interval of \mper contains the true range of detection probabilities, then \mdl will be conservative. Then we use~\Cref{lem:binConservatism} to bound the probability that the previous assumption holds.
    
    Let \sys be in state $x\in\mathcal{X}$. Its perception component $h$ either outputs a detection ($h(g(x,v)=1$) with probability $P_{x}(h)$ or a non-detection ($h(g(x,v)=0$) with probability $1-P_{x}(h)$ (recall that $v$ is an unknown, random noise variable). In the event of a detection, the next state of \sys will be $x_{1}'=f(x,c(1))$ and in the event of a non-detection, the next state of \sys will be $x_{0}'=f(x,c(0))$. This induces the distribution of the next state $x'$ of \sys. 
    
    Now consider our abstraction \mdl stating in its state $s$ that corresponds to $x$ (i.e. $x \in s$). Let's assume that the perception bin detection probability interval of \mper for state $x$ contains $P_{x}(h)$. So the probability interval that \mper sends $a_{per}=1$ to \mcp contains the true detection probability $P_{x}(h)$ and the probability interval that \mper sends $a_{per}=0$ contains the true mis-detection probability $1-P_{x}(h)$. By the assumption that \mcp is conservative (per Problem~\ref{prop:ourProblem}), if \mper sends a detection \edit{($a_{per}=1$)} then the set of next states \edit{$S_{1}' = \{ s' \in S | \exists a_{reach} \in \alpha_{reach} \text{ such that } \delta_{\mcp}(s,\{ 1,a_{reach}\},s')=1 \}$} that \mcp can transition to will contain $x_{1}'$ ($\exists s_{1}' \in S_{1}'$ such that $x_{1}' \in s_{1}'$) and if \mper sends a no detection \edit{($a_{per}=0$)} then the set of next states \edit{$S_{0}' = \{ s' \in S | \exists a_{reach} \in \alpha_{reach} \text{ such that } \delta_{\mcp}(s,\{ 0,a_{reach}\},s')=1 \}$} that \mcp can transition to will contain $x_{0}'$ ($\exists s_{0}' \in S_{0}'$ such that $x_{0}' \in s_{0}'$). So starting from state $x$, the \editAllerton{set of distributions} of the next state defined by $\mcp || \mper$ contain the distribution of the next state for \sys if the detection probability interval for every bin $s_{i}$ of \mper \editAllerton{contains $P_{s_{i}}(h)$}. 

    Now by \Cref{lem:binConservatism}, the perception bin detection probability interval of \mper for state $x$ contains $P_{x}(h)$ with probability $1-\frac{\alpha_{MC}}{|S|}$.

    Finally, noting that \mper has $|S|$ bins, the probability that every perception bin's detection probability interval contains the true range of detection probabilities is $$ 1-\sum_{b\in S} \frac{\alpha_{MC}}{|S|}  \leq 1- \alpha_{MC}$$
    
    % Finally, by \Cref{lem:binConservatism} and noting that \mper has $|S|$ bins, the following statement holds with probability $$ \sum_{b\in S} 1-\frac{\alpha_{MC}}{|S|}  \leq 1- \alpha_{MC}$$
\end{proof}

\end{document}